\newcommand{\ARXIVVERSION}{}
\def\fcmp{\mathbin{\raise 0.6ex\hbox{\oalign{\hfil$\scriptscriptstyle\mathrm{o}$\hfil\cr\hfil$\scriptscriptstyle\mathrm{9}$\hfil}}}}
\setlist[1]{leftmargin=2em}
\providecommand{\tightlist}{%
  \setlength{\itemsep}{0pt}\setlength{\parskip}{0pt}}
\newcommand{\VERB}{\Verb[commandchars=\\\{\},fontsize=\small]}
\definecolor{shadecolor}{RGB}{248,248,248}
\newenvironment{Shaded}{\begin{snugshade}}{\end{snugshade}}
\newcommand{\AttributeTok}[1]{\textcolor[rgb]{0.13,0.29,0.53}{#1}}
\newcommand{\BuiltInTok}[1]{#1}
\newcommand{\CommentTok}[1]{\textcolor[rgb]{0.56,0.35,0.01}{\textit{#1}}}
\newcommand{\ConstantTok}[1]{\textcolor[rgb]{0.56,0.35,0.01}{#1}}
\newcommand{\ControlFlowTok}[1]{\textcolor[rgb]{0.13,0.29,0.53}{\textbf{#1}}}
\newcommand{\DataTypeTok}[1]{\textcolor[rgb]{0.13,0.29,0.53}{#1}}
\newcommand{\DecValTok}[1]{\textcolor[rgb]{0.00,0.00,0.81}{#1}}
\newcommand{\FunctionTok}[1]{\textcolor[rgb]{0.13,0.29,0.53}{\textbf{#1}}}
\newcommand{\KeywordTok}[1]{\textcolor[rgb]{0.13,0.29,0.53}{\textbf{#1}}}
\newcommand{\NormalTok}[1]{#1}
\newcommand{\OperatorTok}[1]{\textcolor[rgb]{0.81,0.36,0.00}{\textbf{#1}}}
\newcommand{\OtherTok}[1]{\textcolor[rgb]{0.56,0.35,0.01}{#1}}
\newcommand{\PreprocessorTok}[1]{\textcolor[rgb]{0.56,0.35,0.01}{\textit{#1}}}
\newcommand{\SpecialCharTok}[1]{\textcolor[rgb]{0.81,0.36,0.00}{\textbf{#1}}}
\newcommand{\StringTok}[1]{\textcolor[rgb]{0.31,0.60,0.02}{#1}}
\DeclareTextSymbol{\textrightarrow}{TS1}{25}
\keywords{ECS}
\author{Patrick Redmond}
\affiliation{
  \institution{University of California, Santa Cruz}
  \country{USA}
}
\author{Jonathan Castello}
\affiliation{
  \institution{University of California, Santa Cruz}
  \country{USA}
}
\author{Jos\'{e} Manuel Calder\'{o}n Trilla}
\affiliation{
  \institution{Haskell Foundation}
  \country{Canada}
}
\author{Lindsey Kuper}
\affiliation{
  \institution{University of California, Santa Cruz}
  \country{USA}
}
\begin{document}

% title and abstract
\title{Exploring the Theory and Practice of Concurrency in the Entity-Component-System Pattern}

\begin{abstract}
    The \emph{Entity-Component-System} (ECS) software design pattern, long used in game development, encourages a clean separation of identity (entities), data properties (components), and computational behaviors (systems).
    Programs written using the ECS pattern are naturally concurrent, and the pattern offers modularity, flexibility, and performance benefits that have led to a proliferation of ECS frameworks.
    Nevertheless, the ECS pattern is little-known and not well understood outside of a few domains.
    Existing explanations of the ECS pattern tend to be mired in the concrete details of particular ECS frameworks, or they explain the pattern in terms of imperfect metaphors or in terms of what it is \emph{not}.
    We seek a rigorous understanding of the ECS pattern via the design of a formal model, \emph{Core ECS}, that abstracts away the details of specific implementations to reveal the essence of software using the ECS pattern.
    We identify a class of Core ECS programs that behave deterministically regardless of scheduling, enabling use of the ECS pattern as a deterministic-by-construction concurrent programming model.
    With Core ECS as a point of comparison, we then survey several real-world ECS frameworks and find that they all leave opportunities for deterministic concurrency unexploited.
    Our findings point out a space for new ECS implementation techniques that better leverage such opportunities.
\end{abstract}

\maketitle

\section{Introduction}
\label{sec:intro}
The \emph{Entity-Component-System} software design pattern (or \emph{ECS pattern}) has been in use for game development since the early 2000s~\citep{bilas2002data,west2007evolve,martin2007entity,west2018using} as well as more recently in GUI programming~\citep{raffaillac-polyphony} and real-time interactive systems~\citep{hatledal-vico}.
Software written using the ECS pattern specifies computations (called \emph{systems}) acting over an association from identifiers (called \emph{entities}) to domain-specific data values (called \emph{components}). We refer to a program composed in this way as an \emph{ECS program}.
The essence of the ECS pattern is the expression of some domain in terms of loosely-coupled transformations over data in a shared non-hierarchical store. These transformations are paired with queries that describe how to fetch their input data from the store.

The ECS pattern encourages the programmer to break aggregated state into distinct components and break complex behaviors into distinct systems,
letting programmers focus on how objects in a domain relate to each other instead of on control flow~\citep{bilas2002data}.
This approach affords several benefits.
ECS programs are inherently concurrent, and for certain forms of concurrency the ECS pattern makes it easy to exploit parallelism and gain a significant and predictable speedup.
The pattern encourages the development of modular high-level domain-specific primitives, enabling members of a team who are working on different aspects of a project (for example, a game's art assets, networking code, and physics engine) to work independently.
Software written using the ECS pattern does not impose a hierarchy of object types, making it easier for developers to adapt to unplanned design changes.
The loosely-coupled architecture of an ECS program is meant to prioritize flexibility over the course of a long-lived development process during which requirements can shift drastically.

Thanks to these benefits, the ECS pattern has endured for more than 20 years\footnote{The earliest reference to the ECS pattern that we are aware of is \citet{bilas2002data}.} as a distinct approach to structuring programs,
and the last decade has seen a proliferation of ECS frameworks for a variety of programming languages~\citep{bevy2024ecs,unity2024ecs,mertens2024flecs,schaller2023specs,gillen2021legion,carpay2018apecs}.
Despite this history, the ECS pattern has rarely been an object of study for software researchers, and among software practitioners the ECS pattern remains well-known only in the domain of game development.
Our work seeks to rectify this lack of attention.

Since existing descriptions of the ECS pattern tend to be mired in low-level details such as memory layouts, or to conflate the pattern with the interface presented by a particular ECS framework, we take a different approach.  We seek to formally model the ECS pattern in an effort to describe it as faithfully to its nature as possible.
We develop a core calculus of the ECS pattern by investigating several prominent ECS frameworks and modeling what we assess to be their core metaphor.
Rather than explaining the ECS pattern in terms of existing theoretical models, we present our formal model as a \emph{synthetic theory} of an as-yet-untreated subject (as distinguished from an \emph{analytic theory}).
We hope it can provide a foundation against which a future analytic theory might be established.

We make the following contributions:
\begin{itemize}[leftmargin=1.5em]
    \item
        In \Cref{sec:core-ecs} we present a core calculus of the ECS pattern, \emph{Core ECS}, with a denotational semantics to give it meaning.
        Core ECS captures the essence of the ECS pattern and aids in reasoning about ECS programs.
        We believe this is the first formalization of the ECS pattern.
    \item
        In \Cref{sec:ecs-conc} we characterize the concurrency and determinism properties of the ECS pattern using Core ECS as a model.
        We define a class of concurrent Core ECS programs that are deterministic in the presence of scheduler non-determinism.
        We suggest that the ECS pattern can be used as a deterministic-by-construction concurrent programming model.
    \item
        In \Cref{sec:practical-ecs} we relate Core ECS to practical implementations of the ECS pattern by surveying real-world ECS frameworks.
        We use Core ECS to make sense of the myriad forms of superficially distinct concurrency that appear in those frameworks.
        We find that the ECS pattern has more concurrency available than existing frameworks take advantage of, suggesting that alternative implementation strategies for ECS frameworks are desirable.
\end{itemize}
We provide context for our contributions by introducing the ECS pattern in \Cref{sec:ecs-background}.
We briefly introduce an \emph{executable model} of Core ECS in \Cref{sec:executable-model-new}.
\Cref{sec:related} discusses related work.
We propose future work and conclude in \Cref{sec:conclusion}.

\section{ECS for the Unfamiliar}
\label{sec:ecs-background}
%
%% * [x] (no math)
%%
%% * [x] Give more information about ECS and also introduce the rest of
%%       the "bad 1d phys simulation"
%%
%% * [ ] Show the non-ecs way to get "bad 1d phys sim" and also the ECS way
%%
%% * [½] Draw a picture (maybe with snippets of pseudocode) to explain
%%       ECS in a less painful way
%%
%% * [½] Show code and describe the 'running example' of the paper

In this section we explain the ECS pattern and terminology for readers who are unfamiliar with it by way of a simple example: a one-dimensional toy physics simulation.
Variations of a toy physics simulation are common in articles explaining the ECS pattern and documentation for ECS frameworks, and we have chosen our example to continue that tradition.
We will use this example throughout the paper.

\subsection{Entities and Components and Systems! Oh, My!}
\label{sec:ecs-background-terms}

The ECS pattern is named for three concepts that are repeatedly instantiated in an ECS program.
\begin{enumerate}
    \item
        An \emph{entity} is an identifier value with which component values may be associated.
    \item
        A \emph{component} is a domain-specific datatype. A value of a component datatype is always associated with an entity (like an object's property).
        We will usually call such values ``components'' for brevity, though
        they are properly ``component values''.
    \item
        A \emph{system} is a domain-specific behavior, expressed as a computation that mutates the association of entities and component values (the program state).
\end{enumerate}

\begin{wrapfigure}{}{0.48\textwidth}
    \vspace{-1.5em}
    \centering
    %\includesvg[width=0.48\textwidth,inkscapelatex=false]{ecs-blocks.svg}
    { \small \includesvg[width=0.46\textwidth]{ecs-blocks.svg} }
    \vspace{-0.25em}
    \caption{
        Entities and components comprise the state of a running ECS program.
        Systems, composed into a single arrow, describe transitions between states.
    }
    \label{fig:ecs-blocks}
    \vspace{-1.25em}
\end{wrapfigure}
\noindent
An ECS program is a \emph{fixed set} of components and systems, together with a \emph{varying one-to-many association} of entities to components.
From some initial state of the entity-component association, an ECS program proceeds iteratively:
The entity-component association determines which systems may be applied to which entities.
When applied, a system may mutate the entity-component association.
This process repeats by rounds to evolve the program state.

\Cref{fig:ecs-blocks} depicts the relationship between entities, components, and systems:
An entity-component association is the state of a running ECS program, while the systems, taken together, are its transitions between states.
In the rest of this section, we will make these concepts concrete with our physics simulation example.

\subsection{A Toy Physics Simulation}
\label{sec:ecs-background-phys}

\begin{figure}[b]
    \begin{subfigure}[t]{0.38\linewidth}
        \begin{center}
            Frame 1: $\texttt{\_→\_\_\_\_\_\_\_}$ \\
            Frame 2: $\texttt{\_\_\_\_\_→\_\_\_}$ \\
            \quad
        \end{center}
        \small
        \quad % spacing between the "graphic" and the caption
        \caption{
          An object with velocity 4 exhibits inertia by jumping rightward that many units between frame 1 and frame 2.
        }
        \label{fig:badphysics-visual-a}
    \end{subfigure}
    \hfill
    \begin{subfigure}[t]{0.58\linewidth}
        \begin{center}
            Frame 1: $\texttt{\_→\_\_\_s\_\_\_}$ \\
            Frame 2: $\texttt{\_\_\_\_\_2\_\_\_}$ \\
            Frame 3: $\texttt{\_\_\_←\_\_\_→\_}$
        \end{center}
        \small
        \quad % spacing between the "graphic" and the caption
        \caption{
            A moving object with velocity 4 collides with a stationary object in frame 2. In frame 3, two moving objects with halved velocity move away from the locus of collision.
        }
        \label{fig:badphysics-visual-b}
    \end{subfigure}
\caption{
    Simple text-based visualizations of our toy physics simulation.
    ``$\texttt{←}$'' or
    ``$\texttt{→}$'' indicate an object moving left or right,
    ``$\texttt{s}$'' indicates a stationary object, and
    ``$\texttt{\_}$'' indicates empty space.
    Where two objects occupy the same space, we write the number of objects.
}
\label{fig:badphysics-visual}
\end{figure}

We consider a toy physics simulation in which objects have a one-dimensional \emph{position} represented by an integer (as if scattered on a number line), and a \emph{velocity}, also represented by an integer (where sign represents the object's direction).
The simulation proceeds as a series of frames, one per line.
Between each frame, objects in the simulation may move to the left or right, exhibiting \emph{inertia}.
For example, in \Cref{fig:badphysics-visual-a}, an object with velocity 4 jumps rightward by 4 units along the line between frame 1 and frame 2 of the simulation.
A moving object may also \emph{collide} with a stationary object, annihilating one of them and causing the other to split into two objects traveling in opposite directions at half the original velocity, as in \Cref{fig:badphysics-visual-b}.\footnote{For this toy simulation, we do not consider collisions between a moving object and another moving object.}

To implement this simulation as an ECS program, we must decide which of its elements correspond to entities, components, and systems. We can make the following representation choices:
\begin{itemize}[leftmargin=1.5em]
\item We can represent the simulated objects themselves using entities.
Stationary and moving objects will be distinguished by which components they are associated with.
\item Since objects have a one-dimensional position and velocity, we can represent those quantities as component values associated with entities.
Stationary objects will be those entities having only a position component, and moving objects will be those with both position and velocity.
\item We can implement the inertia and collision resolution behaviors using systems.
\end{itemize}

\noindent
To start the simulation, we initialize it with some number of objects.
We create several entities and attach to each a one-dimensional position component.
To some subset of the entities, we also attach a nonzero velocity.
Frames of the simulation correspond to iterations of the state:
To advance to the next frame, determine which systems may be applied to which entities, and apply them to update the entity-component association.

\subsection{Queries and Schedules}
\label{sec:ecs-background-phys-details}
Our description of the toy physics simulation is necessarily very informal at this stage.
\Cref{sec:core-ecs} will give us the language to write down the behavior precisely, and \Cref{sec:ecs-conc} will make clear how that behavior is a model of a concurrent program.
Nonetheless we wish to prepare the reader by introducing two additional concepts informally: queries and schedules.

\paragraph{System queries}
Consider how the inertia and collision resolution behaviors (systems) will specify the entities to which they apply.
The inertia system moves objects according to their velocity, and the collision system handles the situation where a moving and a stationary object share the same position.
We can phrase these descriptions more precisely in terms of the inclusion or exclusion of components associated with entities (called a \emph{system query}):
\begin{itemize}[leftmargin=1.5em]
    \item The inertia system applies to any entity with both a position and a velocity component.
    \item The collision system applies to any pair of entities, where one has both a position and a velocity component, and the other has a position component and \emph{no} velocity component.
\end{itemize}

\noindent
When a system query is evaluated against the entity-component association, we call the result its \emph{entity matches}.
For the inertia system this could be a set of entities; for the collision system it would be a set of entity tuples, one moving and one stationary according to the query.
The entity matches could also be considered to contain the component values associated with each entity that made it a valid part of the result (and we do this in \Cref{sec:core-ecs}).
For now, we leave the order in which a system processes its entity matches unspecified; \Cref{sec:ecs-conc} will return to this question and the role it plays in our model of concurrency.

\paragraph{System schedules}
Our description of the physics simulation in \Cref{sec:ecs-background-phys} does not describe how the inertia and collision systems will interact with each other, nor does it indicate whether any behaviors are concurrent.
We can answer these questions by indicating how each individual system is treated and how distinct systems relate to each other (and this is called a \emph{system schedule}):
\begin{itemize}[leftmargin=1.5em]
    \item The inertia system runs concurrently, and afterward, the collision system runs sequentially.
\end{itemize}

\noindent
This is only one possible schedule for our systems.
Instead of running one after the other, a schedule could specify that two systems run concurrently.
However in this example we hope the reader will recognize that,
(1) changing the schedule to run the two systems concurrently, or
(2) changing only the collision system to run concurrently,
are both options which lead to undesirable non-determinism!
Consider the case where two moving objects share a position with a single stationary object, perhaps because they approached from either side, as in \Cref{fig:badphysics-conc}.
Our chosen schedule results in the scenario of \Cref{fig:badphysics-conc-a}, but taking the second option may result in the scenario of \Cref{fig:badphysics-conc-b}.\footnote{
    We assume that the collision system always destroys the moving object.
    If such a system were run concurrently with itself, it will allow both moving objects to collide with the stationary object.
    The choice of which object collides first is non-deterministic, and that collision's write to the velocity of the stationary object will be lost.
}
We discuss these scenarios while introducing our formalism in \Cref{sec:core-ecs}, and make clear how our formalization models concurrency trade-offs in \Cref{sec:ecs-conc}.

\begin{figure}[t]
    \begin{subfigure}[t]{0.48\linewidth}
        \begin{center}
            Frame 1: $\texttt{\_→\_\_\_\_\_s\_←\_\_}$ \\
            Frame 2: $\texttt{\_\_\_\_\_\_\_3\_\_\_\_}$ \\
            Frame 3: $\texttt{\_\_\_\_←←\_\_\_\_→\_}$
        \end{center}
        \small
        \quad % spacing between the "graphic" and the caption
        \caption{Sequential application results in only one collision. The leftward object is unimpeded.}
        \label{fig:badphysics-conc-a}
    \end{subfigure}
    \hfill
    \begin{subfigure}[t]{0.48\linewidth}
        \begin{center}
            Frame 1: $\texttt{\_→\_\_\_\_\_s\_←\_\_}$ \\
            Frame 2: $\texttt{\_\_\_\_\_\_\_3\_\_\_\_}$ \\
            Frame 3: $\texttt{\_\_\_\_←\_\_\_→\_→\_}$
        \end{center}
        \quad % spacing between the "graphic" and the caption
        \small
        \caption{Concurrent application results in two collisions, a lost write, and non-determinism.}
        \label{fig:badphysics-conc-b}
    \end{subfigure}
\caption{
    A rightward-moving object travels at velocity 6 and a leftward-moving object travels at velocity -2.
    Both are set to collide with a stationary object, but there is more than one way to run the collision system, resulting in different states by frame 3.
}
\label{fig:badphysics-conc}
\end{figure}

\section{Core ECS}
\label{sec:core-ecs}
In this section we describe our core calculus for the ECS pattern, \emph{Core ECS}, and provide a formal denotational semantics of Core ECS to give meaning to the ECS pattern.
We will go on to discuss how the features of Core ECS relate to practical ECS frameworks in \Cref{sec:practical-ecs}.

In Core ECS all \emph{state} (\Cref{sec:core-ecs-state}) is represented by an association between entity identifiers and component values.
A Core ECS program consists of a \emph{schedule} (\Cref{sec:core-ecs-schedule}) of \emph{systems} (\Cref{sec:core-ecs-system}) describing how to decompose and update that state.
Each system is a pair of a \emph{query vector} (\Cref{sec:core-ecs-matches}) and a function that produces a \emph{mutation} (\Cref{sec:core-ecs-mutations}).
A single \emph{query} (\Cref{sec:core-ecs-queries}) describes the shape of a single entity.
A query vector specifies the input to a system, in terms of entities and their components, and a mutation describes how to add, update, or remove entity-component associations in the state.

\subsection{Parameters, Grammars, and State}
\label{sec:core-ecs-state}

% Sublanguage
\newcommand{\sublang}[1]{\colorbox{cyan!15}{\({#1}\)}}

% BNF
\newcommand{\bnfeq}{ \mathbin{::=} }
\newcommand{\bnfalt}{ \mathbin{|} }

% Q
\newcommand{\qand}[2]{ {#1}\land{#2} }
\newcommand{\qincl}[1]{ \mathrm{incl}_{#1} }
\newcommand{\qexcl}[1]{ \mathrm{excl}_{#1} }
\newcommand{\qanyway}[1]{ \mathrm{anyway}_{#1} }

% Q denotation
\newcommand{\query}[1]{ {\llbracket {#1} \rrbracket}_c }
\newcommand{\queryTwo}[2]{ {\llbracket {#1} \rrbracket}_{#2} }
\newcommand{\queryT}[1]{ \llparenthesis {#1} \rrparenthesis }

% M
%%\newcommand{\mutseq}{ \mathbin{\rangle\hspace{-0.25em}\rangle} }
\newcommand{\mutseq}{ \bullet }
\newcommand{\mutattach}[3]{ \mathrm{attach}_{{#1}}({#2}, {#3}) }
\newcommand{\mutdetach}[2]{ \mathrm{detach}_{{#1}}({#2}) }
\newcommand{\mutnil}{ \mathrm{nil} }
\newcommand{\mutfresh}[1]{ \nu({#1}) }
\newcommand{\mutfreshbound}[2]{ \colorbox{cyan!15}{\nu\ {#1}.\ {#2}}  }

% Z
\newcommand{\spSeqComp}{ \fcmp{} }
\newcommand{\spSeqOne}[1]{ \mathrm{seq}({#1}) }
\newcommand{\spConcComp}{ \parallel }
\newcommand{\spConcOne}[1]{ \mathrm{conc}({#1}) }
% funs for sp
\newcommand{\appStSched}[2]{ {#1}\mathbin{\Downarrow}{#2} }

\begin{wrapfigure}[]{R}{0.375\textwidth}
    \vspace{-1.5em}
    \centering
    \begin{tabular}{ c r l l }
        \(E\)
            && Entity identifier type \\
        \(K\)
            && Component schema \\
        \(I\)
            && Component label set \\
%%      \(\sublang{T}\)
%%          && Sub-language term of type $T$ \\
    \end{tabular}
    \caption{Symbolic parameters to Core ECS. $K$ is indexed by $i\in I$. Each $K_i$ is a type in the sub-language, and so is $E$.}
    \label{fig:core-ecs-params}
    \vspace{-0.5em}
\end{wrapfigure}

\paragraph{Parameters}
To describe an ECS program we will need a programming language with which to write down systems, but we wish to avoid fully integrating Core ECS and its semantics with a specific language.
Instead, Core ECS is parameterized by an underlying programming language (or sub-language).
We will use types from the underlying programming language as the entity identifiers and components that populate the ECS program state.
Therefore Core ECS is also parameterized by a type $E$ and a collection of component types $K$.
The component types come to us in a mapping from component label $i\in I$ to component type $K_i$, which we will call the program \emph{schema}.
We will use $e$ and $k_i$ to refer to terms of the types $E$ and $K_i$, respectively.
\Cref{fig:core-ecs-params} summarizes the parameters of Core ECS to which we assign a symbol (the sub-language has none).

\Cref{fig:example-state} shows examples of parameters $E$ and $K$ for the toy physics simulation of \Cref{sec:ecs-background-phys}, leaving parameter $I$ implied by the domain of the chosen $K$.
Throughout the paper, we use figures with a gray background to continue the running example of the toy physics simulation.

Since Core ECS characterizes a programming \emph{pattern}, most any underlying programming language is an appropriate choice as long as it supports a few key features (functions, sums, products, and a unit, whose types we will write with $\to$, $+$, $\times$, and $\top$, respectively).
The sub-language must also have a way to generate fresh (universally unique) entity identifiers and a way to communicate mutations ($M$ in \Cref{fig:core-ecs-terms}) to Core ECS.
Although we give examples of Core ECS programs that fix a sub-language, our specification of Core ECS itself cannot refer to terms of the underlying language directly because it is a parameter; instead we will write a type with a light blue background (e.g. \sublang{E\to M}) to stand in for a term in the underlying language having that type.

\newcommand{\queryGrammar}{
        \(q:Q\)
            & \(\bnfeq\)
            & \(            \qincl{i}
                \;\bnfalt\; \qexcl{i}
                \;\bnfalt\; \qanyway{i}
                \;\bnfalt\; (\qand{q}{q'})
                \)
}

\newcommand{\mutationGrammar}{
        \(m:M\)
            & \(\bnfeq\)
            & \(            \mutattach{i}{e}{k_i}
                \;\bnfalt\; \mutdetach{i}{e}
                \;\bnfalt\; (m \mutseq m')
                \;\bnfalt\; \mutfresh{ \sublang{E \to M} }
                \;\bnfalt\; \mutnil
                \)
}

\newcommand{\systemGrammar}{
        \(s:S\)
            & \(\bnfeq\)
            & \( (\,
                 \vec{q} %(…:\vec{Q})
                 ,\
                 \sublang{  E^{\mathrm{dim}(\vec{q})} \times \queryT{\vec{q}} \to M  }
                 \,)
              \)
}

\newcommand{\scheduleGrammar}{
        \(z:Z\)
            & \(\bnfeq\)
            & \(            \spConcOne{s}
                \;\bnfalt\; \spSeqOne{s}
                \;\bnfalt\; (z \spConcComp z')
                \;\bnfalt\; (z \spSeqComp  z')
                \)
}

\begin{figure}
    \begin{tabular}{ c r l l }
        \queryGrammar
            & \quad\quad Query \\
        \mutationGrammar
            & \quad\quad Mutation \\
        \systemGrammar
            & \quad\quad System \\
        \scheduleGrammar
            & \quad\quad Schedule \\
    \end{tabular}
    \vspace{6pt} % space between table and caption
    \caption{
        Grammars defining how to construct Core ECS terms.
        Types written with a light blue background indicate terms of that type in the underlying language.
    }
    \label{fig:core-ecs-terms}
\end{figure}
\begin{wrapfigure}[]{R}{0.40\textwidth}
    \vspace{-0.75em}
    \centering
    \(\begin{aligned}
            \queryT{ \qincl{i}         } &\triangleq K_i
        \\  \queryT{ \qexcl{i}         } &\triangleq \top
        \\  \queryT{ \qanyway{i}       } &\triangleq K_i + \top
        \\  \queryT{ \qand{q}{q'}      } &\triangleq \queryT{q} \times \queryT{q'}
        \\
        \\  \queryT{ \vec{q}           } &\triangleq
                \prod_{1 \le j \le \mathrm{dim}( \vec{q} )}{\queryT{ \vec{q}_j }}
    \end{aligned}\)
    \caption{Query and query-vector result type.}
    \label{fig:core-ecs-query-type}
    \vspace{-0.5em}
\end{wrapfigure}

\paragraph{Grammars}
\Cref{fig:core-ecs-terms} summarizes the grammars used to construct a Core ECS term; later subsections will go into more detail.
For now, know that:
A Core ECS program is a \emph{schedule} of \emph{systems}.
Each system is a pair of a \emph{query vector} and a function in the underlying programming language that produces a \emph{mutation}.
A system function takes as input a pair, for which type of the second component is dependent on the query vector according to the function $\queryT{-}$ in \Cref{fig:core-ecs-query-type}.

We write $\vec{\;}$ over terms indicate a vector and we use $\mathrm{dim}(-)$ for a vector's size.
We write $T^n$ for the type of a vector of size $n$ containing elements of type $T$.
\Cref{fig:core-ecs-terms} uses all three of these conventions; in particular, the type of the function in a system takes $E^{\mathrm{dim}(\vec{q})}$ as part of its input, which is a vector of size $\mathrm{dim}(\vec{q})$ that contains entities.

\paragraph{State}
All information about the progress of a running ECS program in Core ECS is represented by a single logical ``entity-component association''.
In this \emph{state} an entity may be associated with at most one value of each component type.
We structure Core ECS state as an indexed family of partial mappings from entities to component values.
Given parameters $E$, $K$, and $I$, state $c$ is indexed by $i\in I$, and each $c_i$ is a partial mapping with type $E\mathbin{\dot\rightarrow}K_i$.
We give the definition of state below at left, and \Cref{fig:example-state} depicts a state of our running example,
\begin{alignat*}{5}
        c
    &\triangleq
        \{c_i : E\mathbin{\dot\rightarrow}K_i\}_{i \in I}
    &&\hspace{9em}&
        \mathrm{live}(c)
    &\triangleq
        \bigcup_{i \in I}{ \mathrm{dom}(c_i) }
\end{alignat*}

At any point during the execution of an ECS program, it is helpful to know the set of entities that exist in at least one partial mapping of the state, because they have at least one component attached.
This set of live entities $\mathrm{live}(c)$ for state $c$, defined above at right, is the union of the domains of the partial mappings (writing $\mathrm{dom}(-)$ for the domain of a mapping).
Only live entities are considered by system queries for inclusion in entity matches.
Attaching a single component to an entity is sufficient for it to appear in the set of live entities, and conversely, detaching all components is necessary to remove the entity.

\begin{figure}[h]
    \begin{mdframed}[backgroundcolor=gray!15,linewidth=0]
    \[
    \begin{aligned}
        E &\triangleq \{e_1,e_2,e_3,\dots\}
        \\
        K &\triangleq
            \{
                \mathbf{Pos}\mapsto\mathbb{Z},
                \mathbf{Vel}\mapsto\mathbb{Z}
            \}
        \\
        c &\triangleq
            \left\{
                \mathbf{Pos}\mapsto
                    \{ e_1\mapsto 1
                     , e_2\mapsto 7
                     , e_3\mapsto 9
                    \},
                \mathbf{Vel}\mapsto
                    \{ e_1\mapsto 6
                     , e_3\mapsto -2
                    \}
            \right\}
    \end{aligned}
    \]
    \end{mdframed}
    \caption{
        We represent the state $c$ of the physics simulation from Frame 1 of \Cref{fig:badphysics-conc-a} by defining an entity type $E$ and schema $K$:
        We choose a set of opaque entity identifiers for $E$, and a $K$
        that maps the opaque component labels $\mathbf{Pos}$ and $\mathbf{Vel}$
        to the type of integers.
        In the example state $c$ we have two moving objects represented by
        entities $e_1$, $e_3$, and one stationary object $e_2$.
    }
    \label{fig:example-state}
\end{figure}

\subsection{Queries}
\label{sec:core-ecs-queries}

A system specifies its input using a simple query language to identify a subset of live entities.
A system may have more than one query, meaning that it takes more than one entity at a time as input (\Cref{sec:core-ecs-matches}), but here we focus on single queries.
A single query describes a single entity in terms of its components, but may have many matches among the live entities.
A query indicates not only the entity, but also which of its components, to provide to a system.
The grammar of queries given in \Cref{fig:core-ecs-terms} is repeated here.
\begin{center}
\begin{tabular}{ c r l }
    \queryGrammar
\end{tabular}
\end{center}
A query is a conjunction ($q \land q'$) of constraints ($\qincl{i}$ and $\qexcl{i}$) that filter entities based on whether each has an association with the component indicated by the label $i$ (except for $\qanyway{i}$, which performs no filtering).

The meaning of a query, with respect to a state $c$, is given by the function $\query{-}$ defined below at left.
This function retrieves a partial mapping of entities matching the query.
The value associated with each entity in the mapping, its \emph{component result}, is made up of those component values selected by the query.
The type of the component result is given for a query by the function $\queryT{-}$, shown in \Cref{fig:core-ecs-query-type} and duplicated below at right.
For a query $q$, the result $\query{q}$ will have type $E \mathbin{\dot\rightarrow} \queryT{q}$.
See \Cref{fig:example-queries} for an interpretation of the queries introduced in \Cref{sec:ecs-background-phys-details}.
\begin{alignat*}{5}
    \query{ \qincl{i} }
        &\triangleq c_i
&&\quad\quad\quad&
    \queryT{ \qincl{i} }
        &\triangleq K_i
\\
    \query{ \qexcl{i} }
        &\triangleq \{ \ast \}_{e \in \mathrm{live}(c) \setminus \mathrm{dom}(c_i)}
&&\quad\quad\quad&
    \queryT{ \qexcl{i} }
        &\triangleq \top
\\
    \query{ \qanyway{i} }
        &\triangleq c_i \mathbin{\dot\cup} \{ \ast \}_{e \in \mathrm{live}(c) \setminus \mathrm{dom}(c_i)}
&&\quad\quad\quad&
    \queryT{ \qanyway{i} }
        &\triangleq K_i + \top
\\
    \query{ \qand{q}{q'} }
        &\triangleq \{ (\query{q}(e), \query{q'}(e)) \}_{e \in \mathrm{live}(c)}
&&\quad\quad\quad&
    \queryT{ \qand{q}{q'} }
        &\triangleq \queryT{q} \times \queryT{q'}
\end{alignat*}
The $\qincl{i}$ constraint retrieves $c_i$, the partial mapping of entities that have the indicated component, mapped to the component values themselves.
The $\qexcl{i}$ constraint retrieves the complement of $\qincl{i}$, every live entity that does not have the indicated component.
Since there is no associated component value for this mapping, every entity is mapped to $\ast$ (unit) with type $\top$.
We write $a \setminus b$ for the members of $a$ not present in $b$.
The $\qanyway{i}$ constructor retrieves a mapping containing \emph{every} live entity, that yields the component if it is present and the unit value otherwise.\footnote{
    The $\qanyway{i}$ constructor does not play an important role in the rest of this paper. However, in practice it is common for an ECS program to deal with optional components, and so we include it for completeness.
}
The conjunction of constraints behaves like a relational join: results of the conjuncts are filtered to those entities matching both queries, and the component results retrieved for each query are merged into a tuple.

Our queries are related to a fragment of conjunctive query systems.
Conjunctive queries include selection, projection, and joins on N-ary relations~\citep{zhang2022relational}.
In Core ECS state, the partial mappings per component label ($E\mathbin{\dot\rightarrow}K_i$) are analogous to binary relations.
Core ECS queries have selection ($\qincl{i}$ and $\qexcl{i}$), projection ($\qincl{i}$ and $\qanyway{i}$), and joins ($\qand{q}{q'}$).

\begin{figure}
    \begin{subfigure}[t]{0.48\linewidth}
        \begin{mdframed}[backgroundcolor=gray!15,linewidth=0]
        \[
        \begin{aligned}
            x &\triangleq
                \qand{
                    \qincl{\mathbf{Pos}}
                }{
                    \qincl{\mathbf{Vel}}
                }
            \\
            \queryT{x} &=
                \queryT{ \qincl{\mathbf{Pos}} }
                \times
                \queryT{ \qincl{\mathbf{Vel}} }
                \\&=
                K_\mathbf{Pos}
                \times
                K_\mathbf{Vel}
                \\&=
                \mathbb{Z}
                \times
                \mathbb{Z}
            \\
            \query{x} &=
                \{
                    ( c_{\mathbf{Pos}}(e)
                    , c_{\mathbf{Vel}}(e)
                    )
                \}_{e \in \{e_1,e_2,e_3\}}
                \\&=
                \{
                    e_1\mapsto(1,6),
                    e_3\mapsto(9,-2)
                \}
        \end{aligned}
        \]
        \end{mdframed}
        \caption{Query $x$ retrieves entities with both position and velocity: the moving objects.}
        \label{fig:example-queries-a}
    \end{subfigure}
    \hfill
    \begin{subfigure}[t]{0.48\linewidth}
        \begin{mdframed}[backgroundcolor=gray!15,linewidth=0]
        \[
        \begin{aligned}
            y &\triangleq
                \qand{
                    \qincl{\mathbf{Pos}}
                }{
                    \qexcl{\mathbf{Vel}}
                }
            \\
            \queryT{y} &=
                \queryT{ \qincl{\mathbf{Pos}} }
                \times
                \queryT{ \qexcl{\mathbf{Vel}} }
                \\&=
                K_\mathbf{Pos}
                \times
                \top
                \\&=
                \mathbb{Z}
                \times
                \top
            \\
            \query{y} &=
                \{
                    ( c_{\mathbf{Pos}}(e)
                    , \{e_2\mapsto\ast\}(e)
                    )
                \}_{e \in \{e_1,e_2,e_3\}}
                \\&=
                \{
                    e_2\mapsto(7,\ast)
                \}
        \end{aligned}
        \]
        \end{mdframed}
        \caption{Query $y$ retrieves entities with position but without velocity: the stationary objects.}
        \label{fig:example-queries-b}
    \end{subfigure}
\caption{
    Two example queries with applications of $\query{-}$ and
    $\queryT{-}$ against the state given in \Cref{fig:example-state}.
}
\label{fig:example-queries}
\end{figure}

\begin{figure}
    \begin{mdframed}[backgroundcolor=gray!15,linewidth=0]
    \[
    \begin{aligned}
        \queryT{ \langle x,y \rangle }
            &=
                \queryT{ x } \times \queryT{ y }
            \\&=
                ( K_\mathbf{Pos} \times K_\mathbf{Vel} )
                \times
                ( K_\mathbf{Pos} \times \top )
            \\&=
                (\mathbb{Z} \times \mathbb{Z})
                \times
                (\mathbb{Z} \times \top)
        \\
        \query{ \langle x,y \rangle }
            &=
                \query{ x } \times \query{ y }
            \\&=
                \{ e_1\mapsto(1,6),
                   e_3\mapsto(9,-2)
                \}
                \times
                \{ e_2\mapsto(7,\ast)
                \}
            \\&=
            \{
                \langle e_1,e_2 \rangle \mapsto
                \left\langle
                    (1,6),
                    (7,\ast)
                \right\rangle,
            \\ &\hspace{1.65em}
                \langle e_3,e_2 \rangle \mapsto
                \left\langle
                    (9,-2),
                    (7,\ast)
                \right\rangle
            \}
    \end{aligned}
    \]
    \end{mdframed}
\caption{
    Entity matches for query vector $\langle x,y \rangle$ consisting of queries from \Cref{fig:example-queries}, interpreted over state from \Cref{fig:example-state}.
    Together, these queries are the query vector of the collision resolution system described in \Cref{sec:ecs-background}.
}
\label{fig:example-entity-matches}
\end{figure}

\subsection{Query Vectors}
\label{sec:core-ecs-matches}

Recall from \Cref{sec:ecs-background-phys-details} that a system specifies its input with a system query, and the results of evaluating a system query against the entity-component association are entity matches.
In Core ECS a system query is specified with a \emph{query vector}, and its entity matches are the cartesian product of its constituent queries' results.
Systems that implement interactions between entities (e.g. the collision system of \Cref{sec:ecs-background-phys}) use a query vector with multiple queries, and take as input an entity match containing that many entities and the same number of component results.
To compute the entity matches for a system query in Core ECS, we extend $\query{-}$ to take a query vector $\vec{q}$ and return a product of the partial mappings, below at left.
\[
\begin{aligned}
    \query{ \vec{q} }
    &\triangleq \prod_{1 \le j \le \mathrm{dim}( \vec{q} )}{\query{ q_j }}
  &&\hspace{7em}&
    \queryT{ \vec{q} }
    &\triangleq \prod_{1 \le j \le \mathrm{dim}( \vec{q} )}{\queryT{ q_j }}
\end{aligned}
\]
Recall from \Cref{sec:core-ecs-queries} that the type of a query result is a partial mapping.
This is still true for a vector of queries:
The cartesian product of \(n\) query results is a single partial mapping of type \(E^n \mathbin{\dot\rightarrow} \queryT{\vec{q}}\) that takes a vector of \(n\) entities $\vec{e}$ to a vector of \(n\) component results $\vec{w}$.
The $j$th entity and $j$th component result in each entity match $\vec{e}\mapsto\vec{w}$ are drawn from the result of the $j$th query in the query vector $\vec{q}$.
\Cref{fig:example-entity-matches} demonstrates the computation of entity matches for the example collision system.
We will also find it useful in \Cref{sec:core-ecs-system} to regard any partial mapping produced by \(\query{-}\) as having some consistent but unspecified total order as in \(\{\vec{e} \mapsto \vec{w},\; \vec{e'} \mapsto \vec{w'},\; ... \}\).

\subsection{Mutations}
\label{sec:core-ecs-mutations}

Given an entity match as input, a system specifies how Core ECS state should change by generating a \emph{mutation}.
Each system includes a function written in the programming language over which Core ECS is parameterized, and it is the task of this function to produce a mutation.
Intuitively, a mutation is a description of a state update; it replaces some of the mappings with updated values, others are made undefined, and some new mappings may be added.
The grammar for mutations first shown in \Cref{fig:core-ecs-terms} is as follows.
\begin{center}
\begin{tabular}{ c r l }
    \mutationGrammar
\end{tabular}
\end{center}
The second parameter of the $\mutattach{i}{e}{-}$ constructor is a value $k_i$ of the indicated component $i$.
The $\mutfresh{-}$ ``new'' form requires a function written in the underlying language (see \Cref{fig:example-mutation} for an example).
The composition operator $m \mutseq m'$ is right-biased,
and $\mutnil$ is an empty mutation.

\newcommand{\appStMut}[2]{ { #1 }\downarrow{ #2 } }

The meaning of a mutation $m$ interpreted against state $c$ is a new state given by the function $\appStMut{c}{m}$ defined below.
We use the notation $t\{k\mapsto v\}$ for the mapping $t$ in which $k$ has been updated to the value $v$, and $t\{k\mapsto\bot\}$ for the mapping $t$ in which $k$ has been made undefined.

\begin{wrapfigure}[]{l}{0.45\textwidth}
\begin{minipage}{1.0\linewidth}
\[
\begin{aligned}
    \appStMut{c}{ \mutattach{i}{e}{k_i} } &\triangleq
        c\{ i \mapsto
            c_i\{e\mapsto k_i\}
        \}
    \\
    \appStMut{c}{ \mutdetach{i}{e} } &\triangleq
        c\{ i \mapsto
            c_i\{e\mapsto \bot\}
        \}
    \\
    \appStMut{c}{ (m \mutseq m') } &\triangleq
        \appStMut{ (\appStMut{c}{m}) }{m'}
    \\
    \appStMut{c}{ \mutfresh{f} } &\triangleq
        \appStMut{c}{ f(\mathbf{fresh}) }
    \\
    \appStMut{c}{ \mutnil } &\triangleq
        c
\end{aligned}
\]
\end{minipage}
\end{wrapfigure}
The $\mutattach{i}{e}{k_i}$ and $\mutdetach{i}{e}$ mutations update state $c$ such that $i$ maps to $c_i$, itself updated with a new mapping for entity $e$.
Attach will add or overwrite the mapping for entity $e$ to the component value $k_i$.
Detach will remove the mapping for $e$ from $c_i$.
As discussed in \Cref{sec:core-ecs-state}, we assume a facility for generating never-before-seen entities, referenced here as $\mathbf{fresh}$, and used to handle the $\mutfresh{-}$ case.
For $\mutfresh{f}$ the environment provides a fresh entity, which is passed into the function $f$ to produce a mutation.
The composition $m \mutseq m'$ operator first applies $m$ and next $m'$, such that coincident mappings in $m$ will be replaced by those in $m'$.

We give an example interpretation of a mutation in \Cref{fig:example-mutation} using the vocabulary of our running example from \Cref{sec:ecs-background-phys}.
\Cref{fig:example-mutation} also introduces the underlying language that we will use throughout our subsequent examples.
Recall that the underlying language is a parameter to Core ECS; our particular choice of language for these examples is only for demonstration.
We use a lambda calculus extended with conveniences to make writing the examples concise (see
\ifdefined\ARXIVVERSION
% arXiv version: cite appendix
\Cref{apx:example-grammar}
\fi
\ifdefined\PACMPLVERSION
% PACMPL version: cite Appendix A of arXiv version
\citet[Appendix A]{coreecs-extended}.
\fi
for its full grammar).
As it is not the focus of our paper, we do not provide a semantics for it --- its behavior is standard and contains no surprises.

Early iterations of Core ECS used freer monads~\citep{kiselyov2015freer} to represent the body of a system as a monadic computation with algebraic effects (attach, detach, and fresh).
However, it is simpler to use a monoid with generators attach \& detach, a binary composition operator, and an identity nil:
Instead of having a ``fresh'' effect chained via the monadic bind operation to its continuation, the $\mutfresh{-}$ mutation constructor takes that continuation directly and we push the chaining up into its abstract implementation.

\begin{figure}[b]
    \vspace{1em}
    \begin{mdframed}[backgroundcolor=gray!15,linewidth=0]
    \[
    \begin{aligned}
        & \hspace{0.42em}
          \appStMut{c}{
            ( \mutdetach{ \mathbf{Pos} }{e_1}
              \;\mutseq\;
              \mutfresh{ \lambda\,d.\ \mutattach{ \mathbf{Vel} }{d}{ 2 } }
            )
        }
        \\={}&
            \appStMut{
                ( \appStMut{c}{
                    \mutdetach{ \mathbf{Pos} }{e_1}
                } )\;
            }{     \;
                \mutfresh{ \lambda\,d.\ \mutattach{ \mathbf{Vel} }{d}{ 2 } }
            }
        \\={}&
            \appStMut{
                \left\{
                    \mathbf{Pos}\mapsto
                        \{ e_2\mapsto 7
                         , e_3\mapsto 9
                        \},
                    \mathbf{Vel}\mapsto
                        \{ e_1\mapsto 6
                         , e_3\mapsto -2
                        \}
                \right\}
            }{
                \mutfresh{ \lambda\,d.\ \mutattach{ \mathbf{Vel} }{d}{ 2 } }
            }
        \\={}&
            \appStMut{
                \left\{
                    \mathbf{Pos}\mapsto
                        \{ e_2\mapsto 7
                         , e_3\mapsto 9
                        \},
                    \mathbf{Vel}\mapsto
                        \{ e_1\mapsto 6
                         , e_3\mapsto -2
                        \}
                \right\}
            }{
                \mutattach{ \mathbf{Vel} }{e_4}{ 2 }
            }
        \\={}&
            {
                \left\{
                    \mathbf{Pos}\mapsto
                        \{ e_2\mapsto 7
                         , e_3\mapsto 9
                        \},
                    \mathbf{Vel}\mapsto
                        \{ e_1\mapsto 6
                         , e_3\mapsto -2
                         , e_4\mapsto 2
                        \}
                \right\}
            }
    \end{aligned}
    \]
    \end{mdframed}
\caption{
    An example of a mutation interpreted against the state in \Cref{fig:example-state}. It is only illustrative --- the resulting state is not used in our running example.
}
\label{fig:example-mutation}
\end{figure}

\subsection{Systems}
\label{sec:core-ecs-system}

With our discussion of system inputs in \Cref{sec:core-ecs-queries,sec:core-ecs-matches}, and system outputs in \Cref{sec:core-ecs-mutations}, we can now explain Core ECS's notion of a system and how it uses entity matches.
We reproduce the definition from \Cref{fig:core-ecs-terms} here.
\begin{center}
\begin{tabular}{ c r l }
    \systemGrammar
\end{tabular}
\end{center}
A system is a static pairing of a query vector and a function in the underlying programming language. % (written without the $\vec{\;}$ notation for convenience)
For a system $s$, system query $q_s$ refers to the query vector and system function $f_s$ is the function written in the underlying language.
The function $f_s$ takes an entity match of the type produced by $q_s$, and returns a mutation to state.

\begin{figure}[b]
    \begin{subfigure}[t]{0.36\linewidth}
        \begin{mdframed}[backgroundcolor=gray!15,linewidth=0]
        \[
        \begin{aligned}
            \beta
                & \triangleq{} q_\beta,\ f_\beta \\
            q_{\beta}
                & \triangleq{}
                    \langle
                        \qand{
                            \qincl{\mathbf{Pos}}
                        }{
                            \qincl{\mathbf{Vel}}
                        }
                    \rangle
                    \\
            f_{\beta}
                & \triangleq{} \lambda \left( e_j, \; (p_j, v_j) \right). \\
                & \enspace \mutattach{\mathbf{Pos}}{e_j}{p_j + v_j}
        \end{aligned}
        \]
        \end{mdframed}
        \caption{
            The inertia system $\beta$ updates the position of moving entity $e_j$ by adding in one unit of its velocity $v_j$ (reflecting a fixed time gap).
        }
        \label{fig:example-system-functions-inertia}
    \end{subfigure}
    \hfill
    \begin{subfigure}[t]{0.60\linewidth}
        \begin{mdframed}[backgroundcolor=gray!15,linewidth=0]
        \[
        \begin{aligned}
            \delta
                & \triangleq{} q_\delta,\ f_\delta \\
            q_{\delta}
                & \triangleq{}
                    \langle
                        \qand{
                            \qincl{\mathbf{Pos}}
                        }{
                            \qincl{\mathbf{Vel}}
                        },\;
                        \qand{
                            \qincl{\mathbf{Pos}}
                        }{
                            \qexcl{\mathbf{Vel}}
                        }
                    \rangle
                    \\
            f_{\delta}
                & \triangleq{} \lambda
                    \left(
                        % entity ids
                        \langle e_j, e_h \rangle,\;
                        % query results
                        \langle (p_j, v_j), (p_h, -) \rangle
                    \right)
                    .
                \\&\enspace
                    \mathop{\textbf{if}}
                    %e_j \not\overset{?}{=} e_h
                    %\land
                    p_j \overset{?}{=} p_h
                    \mathop{\textbf{then}}
                \\&\enspace\hspace{1em}
                    \mutdetach{\mathbf{Pos}}{e_j}
                    \mutseq
                    \mutdetach{\mathbf{Vel}}{e_j}
                    \;\mutseq
                \\&\enspace\hspace{1em}
                    \mutattach{\mathbf{Vel}}{e_h}{\lceil v_j/2 \rceil}
                    \;\mutseq
                \\&\enspace\hspace{1em}
                    \nu\left(
                        \lambda\ e_\ell.\
                        \mutattach{\mathbf{Pos}}{e_\ell}{p_j}
                        \mutseq
                        \mutattach{\mathbf{Vel}}{e_\ell}{\lceil -v_j/2 \rceil}
                    \right)
                \\&\enspace
                \mathop{\textbf{else}}
                    \mutnil
        \end{aligned}
        \]
        \end{mdframed}
        \caption{
            The collision resolution system $\delta$ splits the velocity of a moving entity $e_j$, when it collides with a stationary entity $e_h$, among the unmoving entity and a new entity, $e_\ell$.
        }
        \label{fig:example-system-functions-collision}
    \end{subfigure}
\caption{
    We formalize the systems from \Cref{sec:ecs-background-phys} with the queries from \Cref{fig:example-queries,fig:example-entity-matches} by writing down terms in the language defined for examples.
}
\label{fig:example-system-functions}
\end{figure}

\Cref{fig:example-system-functions} shows Core ECS definitions for the inertia and collision resolution systems first described in \Cref{sec:ecs-background-phys}.
Recall from \Cref{sec:ecs-background-phys-details}, that the collision system ($\delta$ in \Cref{fig:example-system-functions-collision}) applies to any pair a moving and a stationary object.
We made this query vector, and its result, concrete in \Cref{sec:core-ecs-matches} by describing that an entity match is an element of the cartesian product of query results from a query vector, and working out that result in \Cref{fig:example-entity-matches}.
Now we can see here in \Cref{fig:example-system-functions-collision} that, not only does system query $q_\delta$ have those two elements, the inputs to system function $f_\delta$ (the entity vector and component result vector) also have two elements, all corresponding to the moving and stationary object contained in each entity match.

We will now define precisely how those entity matches are divided and applied to a system function.
We will introduce two different ways to produce a mutation for a state using a system: \emph{concurrent production} and \emph{sequential production}.
Both use a common notion of applying a system to entity matches, but differ in which entity matches are applied.
We must first define two necessary utilities,
\emph{applying} a system to entity matches to produce a composite mutation, and \emph{rolling} a system over entity matches to produce a new entity matches.
With those definitions: concurrent production simply applies a system to entity matches to produce a mutation, whereas sequential production first rolls the system over the entity matches and then applies the system.

The definition of how to \emph{apply} a system $s$ to entity matches is given below.
Recall from \Cref{sec:core-ecs-matches} that entity matches have a consistent but unspecified total order as in \(\{\vec{e_1} \mapsto \vec{w_1},\; \vec{e_2} \mapsto \vec{w_2},\; ... \}\).
We use that ordering here to write a schematic definition of function application syntax against the name of the system, $s(-)$.
\begin{align*}
  s(\{\vec{e_1} \mapsto \vec{w_1},\; \vec{e_2} \mapsto \vec{w_2},\; ... \})
&\triangleq
  f_s(\vec{e_1}, \vec{w_1}) \mutseq f_s(\vec{e_2}, \vec{w_2}) \mutseq ...
\end{align*}
That is, applying a system to some entity matches \emph{en masse} is the same as composing the mutation generated by applying the system function to each match one at a time.
Since the order of the entity matches is unspecified, so too is the order of composition of the resulting mutation.
In \Cref{sec:ecs-conc} we will say more about the implications of this unspecified order.

We define how to \emph{roll} a system $s$ over entity matches $r$ from a starting state $c$, next.
The function \(\mathrm{roll}_s(c, r)\) replaces or drops elements of $r$ to account for the cumulative effect of mutations over $c$ generated by $s$ while passing through $r$.\footnote{
    Many readers will recognize this as a fold-left with an inlined function to filter and map the input structure.
}
It is called ``roll'' by analogy with the rolling shutter of a camera, in which each line of pixels is observed a slight amount of time after the line prior.
\begin{align*}
    \mathrm{roll}_s(c, \{\vec{e} \mapsto \vec{w},\; ... r\})
&\triangleq
    \textbf{let } \vec{w}' = \query{q_s}(\vec{e})
    \textbf{ in }
    \{
        \vec{e} \mapsto \vec{w}'
        ,\; ...
        \mathrm{roll}_s(\appStMut{c}{f_s(\vec{e},\vec{w}')}, r)
    \}
\\
    \mathrm{roll}_s(c, \emptyset)
&\triangleq
    \emptyset
\end{align*}
Roll inductively visits each of the given entity matches following their order \(\{\vec{e_1} \mapsto \vec{w_1}, \vec{e_2} \mapsto \vec{w_2},\; ... \}\).
Roll performs the system query $\query{q_s}$, at given state $c$, and looks up $\vec{e}$ in the resulting entity matches to obtain an updated component result vector $\vec{w}'$.
In the new entity matches that roll returns, $\vec{e}$ is mapped to the updated $\vec{w}'$, and the recursive step uses state updated with the mutation $f_s(\vec{e},\vec{w}')$.
As a notational convenience in this definition, when $\query{q_s}$ is undefined on $\vec{e}$ (meaning $\vec{w}'$ is $\bot$) the system function $f_s$ is considered to yield \(\mathrm{nil}\), and we regard $\vec{e}$ as dropped from the resulting entity matches (the rest of the entity matches $r$ are processed normally).
This elision is what underlies the behavior of \Cref{fig:badphysics-conc-a} in which only one collision occurs even though two appear to be eligible.

We can now define concurrent and sequential production of a mutation for a state using a system.
An illustration of sequential production appears in \Cref{fig:example-exec-collision-seq}.
\begin{center}
\begin{tabular}{ c c c }
    Concurrent production
    & \hspace{6em} &
    Sequential production
    \\
    \( s(\query{q_s}) \)
    & &
    \( s(\mathrm{roll}_s(c,\query{q_s})) \)
\end{tabular}
\end{center}
Concurrent production observes the state $c$ once to obtain and apply entity matches to the system, such that each mutation over $c$ produced by $f_s$ is not visible to the other uses of $f_s$.
In contrast, sequential production observes state once to obtain initial entity matches and again before processing each, such that each mutation produced by $f_s$ is in light of every prior mutation produced by $f_s$.

\begin{figure}[h]
    \begin{mdframed}[backgroundcolor=gray!15,linewidth=0]
    \[
    \begin{aligned}
        c' \triangleq{}&
            \left\{
                \mathbf{Pos}\mapsto
                    \{ e_1\mapsto 7
                     , e_2\mapsto 7
                     , e_3\mapsto 7
                    \},
                \mathbf{Vel}\mapsto
                    \{ e_1\mapsto 6
                     , e_3\mapsto -2
                    \}
            \right\}
    \\
        \delta(\mathrm{roll}_\delta(c', \queryTwo{q_\delta}{c'}))
        ={}&
            \delta(\mathrm{roll}_\delta(c', \{
                                    \langle e_1, e_2 \rangle \mapsto \langle (7, 6), (7, \ast)\rangle,
                                    \langle e_3, e_2 \rangle \mapsto \langle (7, -2), (7, \ast) \rangle
                                  \} ))
%          &&\text{Expand roll.}
        \\={}&
            \delta(\{
                                    \langle e_1, e_2 \rangle \mapsto \langle (7, 6), (7, \ast)\rangle
                                  \} )
        \\={}&
            f_\delta(
                                    \langle e_1, e_2 \rangle , \langle (7, 6), (7, \ast)\rangle
                                   )
    % normalize through sequencing with nil
        \\={}&
            \mutdetach{\mathbf{Pos}}{e_1} \mutseq
            \mutdetach{\mathbf{Vel}}{e_1} \mutseq
            \mutattach{\mathbf{Vel}}{e_2}{\lceil 6/2 \rceil}\ \mutseq
        \\& \mutattach{\mathbf{Pos}}{e_4}{7} \mutseq
            \mutattach{\mathbf{Vel}}{e_4}{\lceil -6/2 \rceil}
    \end{aligned}
    \]
    \end{mdframed}
\caption{
    Sequential production of a mutation using collision system $\delta$ (\Cref{fig:example-system-functions-collision}) for state $c'$.
    In state $c'$, two moving objects share position with one stationary object.
    State $c'$ was obtained (not shown) by applying to state $c$ (\Cref{fig:example-state}) the concurrent production mutation for state $c$ using inertia system $\beta$ (\Cref{fig:example-system-functions-inertia}).
}
\label{fig:example-exec-collision-seq}
\end{figure}

\subsection{Schedules}
\label{sec:core-ecs-schedule}

We complete our description of Core ECS by introducing how one or more systems combine in a \emph{schedule} to make a complete Core ECS program.
A schedule precisely describes when each system query vector observes state and when each mutation produced by a system is applied, all relative to each other.
The grammar of schedules is as follows.
\begin{center}
\begin{tabular}{ c r l }
    \scheduleGrammar
\end{tabular}
\end{center}
The $\spConcOne{s}$ and $\spSeqOne{s}$ constructors lift a single system to a schedule.
The $(z_1 \spConcComp z_2)$ and $(z_1 \spSeqComp z_2)$ constructors compose sub-schedules.
Both pairs include a concurrent and a sequential variant, and we will explore the implications of the concurrent variants for the determinism of a Core ECS program in \Cref{sec:ecs-conc}.

We define the function $\appStSched{c}{z}$, below, to interpret a schedule $z$ over a state $c$ and produce a composite mutation which may be used to advance the state of an ECS program.

\begin{wrapfigure}[]{l}{0.45\textwidth}
\begin{minipage}{1.0\linewidth}
\[
\begin{aligned}
    \appStSched{c}{ \spConcOne{s} } &\triangleq
        s(\query{q_s})
\\
    \appStSched{c}{ \spSeqOne{s} } &\triangleq
        s(\mathrm{roll}_s(c,\query{q_s}))
\\
    \appStSched{c}{ (z \spConcComp z') } &\triangleq
        (\appStSched{c}{z}) \mutseq (\appStSched{c}{z'})
\\
    \appStSched{c}{ (z \spSeqComp z') } &\triangleq
        (\appStSched{c}{z})
        \mutseq
        (\appStSched{ (\appStMut{c}{(\appStSched{c}{z})}) }{z'})
\end{aligned}
\]
\end{minipage}
\end{wrapfigure}
The cases for $\appStSched{c}{\spConcOne{s}}$ and $\appStSched{c}{\spSeqOne{s}}$ correspond to the concurrent production and sequential production of a mutation using system $s$, as discussed in \Cref{sec:core-ecs-system}.
Recall that concurrent production observes state only once, whereas sequential production observes it before every call to $f_s$.
The cases for $\appStSched{c}{(z \spConcComp z')}$ and $\appStSched{c}{(z \spSeqComp z')}$ have a similar relationship to each other.
In the $\appStSched{c}{(z \spConcComp z')}$ case the sub-schedules are treated as concurrent; the mutation generated by $z$ is not visible in $z'$ and vice versa.
Whereas the $\appStSched{c}{(z\spSeqComp z')}$ case executes sub-schedules in sequence, such that the mutation generated by $z$ is applied to $c$ and visible to $z'$ (but not vice versa).

To advance the state of a Core ECS program:
Interpret the schedule $z$ over the current state $c$, as in $\appStSched{c}{z}$, to obtain a mutation representing the current state transition.
Next interpret that mutation $m$ over state $c$, as in $\appStMut{c}{m}$, to obtain updated state $c'$.
We define schedule application $z(c)$ below, as a shorthand for this process, by borrowing function application syntax.
$$
z(c) \triangleq \appStMut{c}{(\appStSched{c}{z})}
$$
We illustrate the interpretation of an example schedule in \Cref{fig:example-exec-schedule}.
In \Cref{sec:ecs-conc} we describe how to assess whether a schedule, such as the one in \Cref{fig:example-exec-schedule}, is deterministic.

\begin{figure}[t]
    \begin{mdframed}[backgroundcolor=gray!15,linewidth=0]
    \[
    \begin{aligned}
        z &\triangleq
            \spConcOne{\beta} \spSeqComp \spSeqOne{\delta}
    \\
        z(c) &= \appStMut{c}{(\appStSched{c}{z})}
             &&\text{Substitute definition of $z$.}
    \\
            &= \appStMut{c}{(
                \appStSched{c}{ ( \spConcOne{\beta} \spSeqComp \spSeqOne{\delta} ) }
            )}
            &&\text{Expand $\appStSched{c}{ ( - \spSeqComp - ) }$.}
    \\
            &= \appStMut{c}{(
                (\appStSched{c}{\spConcOne{\beta}})
                \mutseq
                (\appStSched{
                    (\appStMut{c}{(\appStSched{c}{\spConcOne{\beta}})})
                }{\spSeqOne{\delta}})
            )}
            &&\text{Expand two $\appStSched{c}{\spConcOne{\beta}}$.}
    \\
            &= \appStMut{c}{(
                \beta(\query{q_\beta})
                \mutseq
                (\appStSched{
                    (\appStMut{c}{\beta(\query{q_\beta})})
                }{\spSeqOne{\delta}})
            )}
            &&\text{Expand $\appStSched{-}{\spSeqOne{\delta}}$.}
    \\
            &= \appStMut{c}{(
                \beta(\query{q_\beta})
                \mutseq
                \delta(\mathrm{roll}_\delta(
                    \appStMut{c}{\beta(\query{q_\beta})},
                    \queryTwo{q_\delta}{
                        \appStMut{c}{\beta(\query{q_\beta})}
                    }
                ))
            )}
            &&\text{Expand $\appStMut{c}{(-\mutseq -)}$.}
    \\
            &=  \appStMut{
                (\appStMut{c}{\beta(\query{q_\beta})})
                }{\delta(\mathrm{roll}_\delta(
                    \appStMut{c}{\beta(\query{q_\beta})}, 
                    \queryTwo{q_\delta}{
                        \appStMut{c}{\beta(\query{q_\beta})}
                    }
                ))}
            &&\text{Substitute $c'$ for $\appStMut{c}{\beta(\query{q_\beta})}$.}
    \\
            &=  \appStMut{c'}{\delta(\mathrm{roll}_\delta(
                    c',
                    \queryTwo{q_\delta}{c'}
                ))}
            &&\text{This $c'$ is that of \Cref{fig:example-exec-collision-seq}.}
    \end{aligned}
    \]
    \end{mdframed}
\caption{
    We formalize the schedule discussed in \Cref{sec:ecs-background-phys-details} as $z$ for the systems $\beta$ and $\delta$ defined in \Cref{fig:example-system-functions}.
    We interpret $z$ by performing $z(c)$, showing how execution proceeds from \Cref{fig:example-state} to \Cref{fig:example-exec-collision-seq}.
}
\label{fig:example-exec-schedule}
\end{figure}

\section{Concurrency in ECS}
\label{sec:ecs-conc}

In this section we characterize the concurrency and determinism properties of the ECS pattern using Core ECS as a model.
We identify the elements of Core ECS that are notionally concurrent, and that under scheduler non-determinism could lead to non-deterministic results (\Cref{sec:core-ecs-conc}).
We identify a new ECS-based model for deterministic concurrency by using Core ECS to define a class of ECS programs that are deterministic despite scheduler non-determinism (\Cref{sec:det-conc}).

\begin{wrapfigure}[]{r}{0.555\textwidth}
    \vspace{-2.5em}
    \begin{subfigure}[t]{\linewidth}
        \centering
        {\small \includesvg[width=\linewidth]{converge-ltr-a.gv.svg} }
        \vspace{-2.5em}
        \caption{
            Schedule $\spConcOne{\beta} \spSeqComp \spSeqOne{\delta}$ calls the collision system $\delta$ only once: After $e_2$ gets a velocity it is no longer a result of $q_\delta$.
        }
        \label{fig:badphysics-sched-a}
    \end{subfigure}
    \\
    \vspace{-.25em}
    \begin{subfigure}[t]{\linewidth}
        \centering
        {\small \includesvg[width=\linewidth]{converge-ltr-b.gv.svg} }
        \vspace{-2.5em}
        \caption{
            Schedule $\spConcOne{\beta} \spSeqComp \spConcOne{\delta}$ calls the collision system $\delta$ twice on $e_2$, causing a lost write (of $-1$) to its velocity.
        }
        \label{fig:badphysics-sched-b}
    \end{subfigure}
    \vspace{-0.5em}
    \caption{
        Comparison of schedules corresponding to the scenarios of \Cref{fig:badphysics-conc-a,fig:badphysics-conc-b}.
    }
    \label{fig:badphysics-sched}
    \vspace{-1.25em}
\end{wrapfigure}

\subsection{Concurrency Model}
\label{sec:core-ecs-conc}
Our characterization of concurrency in Core ECS rests on the question of whether one mutation is visible during the production of another.
Since mutations represent state updates, this notion of visibility establishes an ordering relationship between updates.
The design of schedules reflects this idea.
Recall from \Cref{sec:core-ecs-system,sec:core-ecs-schedule} that in the interpretation of a schedule to produce a composite mutation,
\emph{sequential} means that the effects of some constituent mutation are visible in subsequently produced mutations (whether by the same system or others).
Conversely, \emph{concurrent} means that the effects of some constituent mutation are \textbf{not} visible in subsequently produced mutations.

\Cref{fig:badphysics-sched} illustrates the distinction between sequential and concurrent schedules with the application of two schedules to a state where two moving objects are about to converge on the same stationary object (Frame 1 in \Cref{fig:badphysics-conc}).
After $\spConcOne{\beta}$ is applied to $c$, should both $e_1$ and $e_3$ collide with $e_2$?
Even though both moving objects, paired with the stationary object, are in the entity matches of collision system $\delta$ for state $c'$, under schedule $\spSeqOne{\delta}$ only one invocation of $f_\delta$ occurs because its effect is visible when the second is considered (\Cref{fig:badphysics-sched-a}).
On the other hand, under schedule $\spConcOne{\delta}$ both collisions occur, resulting in a lost write to the stationary object's velocity (\Cref{fig:badphysics-sched-b}).
Were this a real parallel implementation, the question of which of the concurrent writes would complete last would be settled by scheduler non-determinism.

We therefore analyze the concurrency of ECS programs by mimicking scheduler non-determinism, and adopt a relaxed perspective of the order in which mutations are produced under concurrent schedules $\spConcOne{-}$ and $({-}\spConcComp{-})$.
For a given system $s$, while $\spSeqOne{s}$ remains beholden to the ``consistent but unspecified total order'' of entity matches (\Cref{sec:core-ecs-matches}), the interpretation of $\spConcOne{s}$ is regarded as taking place in an arbitrary order over entity matches (intra-system concurrency).
Similarly, given sub-schedules $z$ and $z'$, under schedule ${z}\spSeqComp{z'}$ the mutations produced in $z$ are produced prior to the mutations produced by $z'$, but the interpretation of ${z}\spConcComp{z'}$ is regarded to be the same as ${z'}\spConcComp{z}$ (inter-system concurrency).

To formalize this relaxed view, we characterize the admissible orderings of system function invocations under a schedule as the linearizations of a partial order.
For schedule $z$ at state $c$ we define a partial order \(\langle \mathbf{po}_c(z),\ \preceq_z \rangle\).
Its elements are the system function invocations performed by \(z(c)\), defined inductively.
\begin{align*}
    \mathbf{po}_c(\spConcOne{s})
        &\triangleq \{(f_s, r) \mid r \in \query{q_s}\} \\
    \mathbf{po}_c(\spSeqOne{s})
        &\triangleq \{(f_s, r) \mid r \in \mathrm{roll}_s(c, \query{q_s})\} \\
    \mathbf{po}_c(z \spConcComp z')
        &\triangleq \mathbf{po}_c(z) \cup \mathbf{po}_c(z') \\
    \mathbf{po}_c(z \spSeqComp z')
        &\triangleq \mathbf{po}_c(z) \cup \mathbf{po}_{z(c)}(z')
\end{align*}
For brevity, we have treated the collection of entity matches \(\{\vec{e_i} \mapsto \vec{w_i}\}_i\) as its graph \(\{(\vec{e_i}, \vec{w_i})\}_i\).

Next we give the ordering relation among system function invocations, inductively.
\begin{align*}
    u \preceq_{\spConcOne{s}} u'
        &\triangleq u = u' \\
    (f_s, r) \preceq_{\spSeqOne{s}} (f_s, r')
        &\triangleq \text{\(r\) and \(r'\) are like-ordered in \(\mathrm{roll}_s(c, \query{q_s})\)} \\
    u \preceq_{z \spConcComp z'} u'
        &\triangleq (u \preceq_{z} u') \lor (u \preceq_{z'} u') \\
    u \preceq_{z \spSeqComp z'} u'
        &\triangleq (u \preceq_{z} u') \lor (u \preceq_{z'} u') \lor (u \in \mathbf{po}_c(z) \land u' \in \mathbf{po}_{z(c)}(z'))
\end{align*}
For system $s$, schedule $\spConcOne{s}$ imposes no dependencies among $f_s$ invocations, whereas $\spSeqOne{s}$ orders $f_s$ invocations as rolling the entity matches of $s$ would (``like-ordered''), and vacuously does not order other invocations.
Schedules formed by $({-}\spConcComp{-})$ require only that the ordering dependencies of the sub-schedules be respected, while schedules formed by $({-}\spSeqComp{-})$ additionally require that one side applies wholly after the other (by recursing on an updated state).

Every linearization of \(\mathbf{po}_c(z)\) thus respects $(-\preceq_z -)$ and fits our relaxed perspective of the order in which mutations are produced.
Furthermore, every such linearization corresponds to a composite mutation, \(f_1(r_1) \mutseq \dots \mutseq f_n(r_n)\), and we will refer to the linearization and its composite mutation interchangeably.

\subsection{Deterministic Concurrency}
\label{sec:det-conc}

We can now identify a class of concurrent ECS programs which are nonetheless deterministic under scheduler non-determinism.
Our result can be summarized imprecisely:
A system that does not obtain entities from component values and has only a singleton query vector is deterministic under schedule $\spConcOne{-}$, and two sub-schedules that write to disjoint sets of component labels are deterministic under schedule $({-}\spConcComp{-})$.
Schedules $\spSeqOne{-}$ and $({-}\spSeqComp{-})$ are always deterministic because they are not concurrent (they allow only one order of system function executions).

Our analysis is based on the observation that mutations affecting disjoint locations in ECS state \emph{commute}.
Extending this observation, a schedule for which every concurrent pair of mutations commutes is safe to execute concurrently.
To formalize this argument, we will first identify the conditions under which a schedule gives rise to determinism, and next describe how to construct such schedules.
To get started, we give a few definitions.

\begin{definition}[Mutation equivalence]
    Mutations \(m\) and \(m'\) are \emph{equivalent} (written \(m \simeq m'\)) if for all states \(c\) we can arrange that \(\appStMut{c}{m} = \appStMut{c}{m'}\) (i.e., by a correlated choice of fresh entities).\footnote{
        Since the choice of a fresh entity to satisfy $\mutfresh{-}$ is arbitrary, interpreting $\appStMut{c}{\mutfresh{\lambda\,e.\ \mutattach{i}{e}{k_i}}}$ may produce distinct result states. However, since $E$ is an opaque parameter to Core ECS, those states are indistinguishable.}
    Similarly, two linearizations of a schedule are equivalent if their corresponding mutations are.
\end{definition}

\begin{definition}[Mutation commutativity]
    Mutations \(m\) and \(m'\) are said to \emph{commute} if \(m \mutseq m' \simeq m' \mutseq m\).
\end{definition}

\begin{definition}[Schedule safety]
    A schedule \(z\) is said to be \emph{safe at state \(c\)} if for every concurrent pair $(f_s, r)$ and $(f_{s'}, r')$ in $\mathbf{po}_c(z)$ the mutations $f_s(r)$ and $f_{s'}(r')$ commute.
\label{def:safe}
\end{definition}

\begin{definition}[Schedule determinism]
    A schedule $z$ is said to be \emph{deterministic at state $c$} if all linearizations of $\mathbf{po}_c(z)$ are equivalent.
\label{def:deterministic}
\end{definition}

\begin{theorem}[Schedule safety implies schedule determinism]
    Any schedule \(z\) safe at state \(c\) is deterministic at state $c$.
\label{thm:safe-is-deterministic}
\end{theorem}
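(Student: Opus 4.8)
The plan is to recast determinism as a statement about linear extensions of the poset $\langle \mathbf{po}_c(z), \preceq_z\rangle$: a linearization is exactly a linear extension, and its composite mutation is the $\mutseq$-composition of its invocations in that order. Two preliminary facts make such composites well-behaved. First, $\mutseq$ is associative at the level of state transformation, since $\appStMut{c}{(m \mutseq m')} = \appStMut{(\appStMut{c}{m})}{m'}$ unfolds associatively, so a composite mutation may be read as a left-to-right state transformer. Second, $\simeq$ is a congruence for $\mutseq$: if $m \simeq m'$ then $p \mutseq m \mutseq q \simeq p \mutseq m' \mutseq q$ for any $p,q$. The left-context case applies the same $p$ (with the same fresh choices) to reach a common state before invoking $m \simeq m'$; the right-context case applies the same $q$ to the equal states produced by $m$ and $m'$. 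Together these let us reorder two adjacent commuting sub-mutations inside any composite while preserving $\simeq$.

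The combinatorial core is a standard fact about finite posets, which I would isolate as a lemma: any two linear extensions $L, L'$ of $\langle \mathbf{po}_c(z), \preceq_z\rangle$ are connected by a finite sequence of adjacent transpositions, each swapping a pair of elements incomparable under $\preceq_z$, with every intermediate sequence again a linear extension. The proof is by induction on the number of elements. Let $a$ be the first element of $L'$. No element preceding $a$ in $L$ can be $\preceq_z$-comparable to $a$: if $b \prec_z a$ then $b$ precedes $a$ in every linear extension, contradicting that $a$ is first in $L'$; and $a \prec_z b$ is impossible since $b$ precedes $a$ in $L$. Hence every element before $a$ in $L$ is incomparable to $a$, so $a$ can be bubbled to the front of $L$ by adjacent incomparable transpositions, each preserving the linear-extension property. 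Removing $a$ and restricting the poset, the induction hypothesis handles the remaining $n-1$ elements.

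Finally I would combine the two ingredients. Being incomparable under $\preceq_z$ is exactly what the paper calls a \emph{concurrent pair}, so by the safety hypothesis every such pair $(f_s, r),(f_{s'}, r')$ satisfies $f_s(r) \mutseq f_{s'}(r') \simeq f_{s'}(r') \mutseq f_s(r)$. Applying the congruence fact to each adjacent transposition in the sequence connecting two arbitrary linearizations shows their composite mutations are $\simeq$-related, and transitivity of $\simeq$ chains the steps. Since the two linearizations were arbitrary, all linearizations of $\mathbf{po}_c(z)$ are equivalent, which is precisely determinism at $c$.

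I expect the main obstacle to be the congruence (and transitivity) of $\simeq$ in the presence of fresh-entity generation: the swap argument must guarantee that the correlated fresh-entity choices needed to equate $m \mutseq m'$ with $m' \mutseq m$ at an intermediate state extend to the whole composite, and that any downstream $\mutfresh{-}$ form in the shared suffix consumes the ``same'' fresh entities on both sides. The cleanest route is to establish once, up front, that $\simeq$ is an equivalence relation and a congruence for $\mutseq$ (treating freshness abstractly as renaming of opaque entity identifiers), after which the reordering argument reduces to the purely combinatorial lemma.
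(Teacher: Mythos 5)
Your proof is correct and follows essentially the same route as the paper's: both reduce determinism to showing that any two linearizations connected by a single order-respecting adjacent transposition are equivalent, and both discharge that step by observing that the swapped pair must be concurrent and therefore commutes by safety. The only differences are ones of explicitness --- you prove the combinatorial bubbling lemma that the paper simply cites, and you spell out the congruence and transitivity of $\simeq$ (including the fresh-entity correlation issue) that the paper's proof uses silently.
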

\begin{proof}
    Since we may transform one linearization into any other by a chain of order-respecting adjacent transpositions~\citep{etienne-linearizations}, it suffices to show that any two linearizations related by a single order-respecting adjacent transposition are equivalent.
    Given \(m\) and \(m'\) differing only in the order of two adjacent mutations:
    \begin{align*}
         m = f_1(r_1) \mutseq \dots \mutseq f_a(r_a) &\mutseq f_b(r_b) \mutseq \dots \mutseq f_n(r_n) \\
        m' = f_1(r_1) \mutseq \dots \mutseq f_b(r_b) &\mutseq f_a(r_a) \mutseq \dots \mutseq f_n(r_n)
    \end{align*}
    Since both \(m\) and \(m'\) respect \(\mathbf{po}_c(z)\) but are different, it must be that \((f_a, r_a)\) and \((f_b, r_b)\) are concurrent.
    Given that and the assumption that \(z\) is safe at state \(c\), we know that \(f_a(r_a)\) and \(f_b(r_b)\) commute.
    Therefore, \(m \simeq m'\).
\end{proof}

Since any safe schedule is deterministic by \Cref{thm:safe-is-deterministic}, we now turn to a discussion of how to construct safe schedules.
We will require a mechanism to examine which entities and components are affected by a schedule at some state in order to show that it is safe.
For this we define the \emph{influence} of a mutation $m$, $\mathrm{infl}(m)$, or of a schedule $z$ at some state $c$, $\mathrm{infl}_c(z)$, to be the subset of \(E \times I\) pairs necessarily affected by it.
In the case of a mutation, the complement of \(\mathrm{infl(m)}\) is the set of entities and components (as indexes into state) for which the state is unchanged, i.e. $(\appStMut{c}{m})_i(e) = c_i(e)$ holds for all $c$.
Influence is defined inductively for both mutations (left) and schedules (right):
\\ % first column (mutation influence)
\begin{minipage}{0.5\linewidth}
    \begin{alignat*}{2}
        \mathrm{infl}(\mutattach{i}{e}{k_i})
        &\triangleq \{(e, i)\}
    \\
        \mathrm{infl}(\mutdetach{i}{e})
        &\triangleq \{(e, i)\}
    \\
        \mathrm{infl}(\mathrm{m \mutseq m'})
        &\triangleq \mathrm{infl}(m) \cup \mathrm{infl}(m')
    \\
        \mathrm{infl}(\mutfresh{f})
        &\triangleq \bigcap_{e \in E} \mathrm{infl}(f(e))
    \\
        \mathrm{infl}(\mutnil)
        &\triangleq \emptyset
    \end{alignat*}
\end{minipage}
% second column (schedule influence)
\begin{minipage}{0.5\linewidth}
    \begin{alignat*}{2}
        \mathrm{infl}_c(\spConcOne{s})
        &\triangleq \bigcup_{r \in \query{q_s}} \mathrm{infl}(f_s(r))
    \\
        \mathrm{infl}_c(\spSeqOne{s})
        &\triangleq \bigcup_{r \in \mathrm{roll}_s(c, \query{q_s})} \mathrm{infl}(f_s(r))
    \\
        \mathrm{infl}_c(z \spConcComp z')
        &\triangleq \mathrm{infl}_c(z) \cup \mathrm{infl}_c(z')
    \\
        \mathrm{infl}_c(z \spSeqComp z')
        &\triangleq \mathrm{infl}_c(z) \cup \mathrm{infl}_{z(c)}(z')
    \end{alignat*}
\end{minipage}
\\ % end columns
\\ % breathing room
The influence of a mutation (or a schedule) is the union of those entity and component label pairs that it affects --- a ``memory footprint'' --- except in the case of $\mutfresh{-}$.
Since the interpretation $\appStMut{c}{\mutfresh{f}}$ calls $f$ on a fresh entity not present in $c$, the influence on that entity and its components is not visible to any concurrent mutation, so we discard that influence using intersection.
Of the influence generated by $f$, only that invariant to the choice of fresh entity is preserved.

With influence we can now make statements about when mutations or schedules commute, which will enable us to state rules of safe construction.
We do so in the following lemma and theorem:

\begin{lemma}
    \label{thm:infl-comm}
    %Mutations with disjoint influence commute.

    For mutations $m$ and $m'$,
    if \(\mathrm{infl}(m) \cap \mathrm{infl}(m') = \emptyset\)
    then $m$ and $m'$ commute.
\end{lemma}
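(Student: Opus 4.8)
The plan is to reduce commutativity to a pointwise comparison of states and then dispatch it by a locality argument. Two states are equal exactly when they agree at every location $(e,i) \in E \times I$, so to establish $m \mutseq m' \simeq m' \mutseq m$ it suffices to fix an arbitrary state $c$, arrange a suitable choice of fresh entities, and show that $\appStMut{(\appStMut{c}{m})}{m'}$ and $\appStMut{(\appStMut{c}{m'})}{m}$ agree at each location. First I would prove, by structural induction on mutations, a \emph{footprint lemma}: for fixed choices of the fresh entities it allocates, the interpretation $\appStMut{-}{m}$ updates a state only within a write-set $W(m)$, setting each location of $W(m)$ to a value determined by $m$ alone (independent of the incoming state) and preserving every location outside $W(m)$. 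The inductive cases are routine: $\mutattach{i}{e}{k_i}$ and $\mutdetach{i}{e}$ write the absolute values $k_i$ and $\bot$ irrespective of the prior entry; the right-biased composition $m \mutseq m'$ is a last-write-wins override of the two write-sets; and $\mutnil$ writes nothing. The only delicate case is $\mutfresh{f}$, for which $W(\mutfresh{f})$ contains the locations $f$ touches at the freshly allocated entity --- locations that are \emph{not} recorded in $\mathrm{infl}(\mutfresh{f})$, since the intersection $\bigcap_{e} \mathrm{infl}(f(e))$ in the definition of influence deliberately discards them. Consequently $W(m) = \mathrm{infl}(m) \cup \Phi(m)$, where every entity occurring in the ``fresh part'' $\Phi(m)$ is absent from $\mathrm{live}(c)$.

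Next I would use the latitude granted by the definition of $\simeq$ to make a \emph{correlated} choice of fresh entities. Since the entity type supplies an unbounded stock of never-before-seen identifiers, I can choose disjoint allocation sets for $m$ and for $m'$, both disjoint from $\mathrm{live}(c)$, and reuse these same sets when computing each of the two composites. With this choice, $\Phi(m)$ and $\Phi(m')$ are disjoint from one another and each is disjoint from the other mutation's footprint. Combined with the hypothesis $\mathrm{infl}(m) \cap \mathrm{infl}(m') = \emptyset$, this yields $W(m) \cap W(m') = \emptyset$, so the two mutations write to genuinely disjoint locations in both orders.

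Finally I would fix a location $(e,i)$ and argue by cases on which write-set, if any, contains it. If $(e,i)$ lies outside both, the footprint lemma says neither mutation changes it, so both composites return $c_i(e)$. If $(e,i) \in W(m)$, then $(e,i) \notin W(m')$, so $m'$ preserves it while $m$ writes the same state-independent value regardless of whether it acts on $c$ or on $\appStMut{c}{m'}$; hence both composites agree there. The case $(e,i) \in W(m')$ is symmetric. Agreement at every location gives $\appStMut{(\appStMut{c}{m})}{m'} = \appStMut{(\appStMut{c}{m'})}{m}$, and since $c$ was arbitrary we conclude $m \mutseq m' \simeq m' \mutseq m$, i.e. $m$ and $m'$ commute. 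The hard part will be the $\mutfresh{-}$ machinery: because influence intentionally forgets a fresh entity's own locations, the naive claim ``the state is unchanged outside $\mathrm{infl}(m)$'' is false, and the argument only closes once the correlated choice of fresh entities guarantees that these forgotten locations of the two mutations collide neither with each other nor with the other's influence; the right-biased semantics of $\mutseq$ is needed only to verify the state-independence of each write in the footprint lemma.
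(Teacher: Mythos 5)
Your proof is correct and follows essentially the same route as the paper's: both rest on the observation that attach/detach writes are state-independent values landing in disjoint locations once influences are disjoint, and both handle $\nu(-)$ by a correlated choice of pairwise-disjoint fresh entities so that the writes at fresh entities (which influence deliberately forgets) collide neither with each other nor with the other mutation's footprint. The paper compresses your footprint lemma and pointwise case analysis into the word ``immediate'' (after reducing $\nu$-involving mutations to $\nu$-free ones by inlining the chosen fresh entities), so your write-up is a more detailed elaboration of the same argument rather than a different one.
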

\begin{proof}
    For mutations not involving \(\nu(-)\), this is immediate: such mutations are compositions of attach, detach, and nil, which will commute because they affect distinct entity-component indices.

    For mutations involving \(\nu(-)\) we arrange for disjoint fresh entities to be provided.
    Then the mutations will behave as though those provided entities were inlined within them. They reduce to mutations not involving \(\nu(-)\) whose influences remain disjoint, and commute as described above.
\end{proof}

\begin{theorem}
    \label{thm:infl-safe-schedule}
    Construction of safe schedules by cases is as follows:
    \begin{enumerate}
        \item The schedule \(\spSeqOne{s}\) is safe at all states for any system \(s\).
        \item The schedule \(z \spSeqComp z'\) is safe at state \(c\) if \(z\) is safe at \(c\) and \(z'\) is safe at \(z(c)\).
        \item The schedule \(\spConcOne{s}\) is safe at state \(c\) if \(\mathrm{infl}(f_s(r)) \cap \mathrm{infl}(f_s(r')) = \emptyset\) for every distinct pair of entity matches \(r\) and \(r'\) in \(\query{q_s}\).
        \item The schedule \(z \spConcComp z'\) is safe at state \(c\) if \(z\) and \(z'\) are safe at \(c\) and \(\mathrm{infl}_c(z) \cap \mathrm{infl}_c(z') = \emptyset\).
    \end{enumerate}
\end{theorem}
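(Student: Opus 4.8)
The plan is to prove each of the four cases directly against \Cref{def:safe}: for each constructed schedule I would check that every \emph{concurrent} pair of invocations in $\mathbf{po}_c(z)$ (i.e.\ every pair that is mutually incomparable under $\preceq_z$) gives rise to commuting mutations. The two concurrent cases (3 and 4) will discharge this obligation via \Cref{thm:infl-comm}, turning disjointness of influence into commutativity, whereas the two sequential cases (1 and 2) will instead argue that the relevant pairs are simply never concurrent, making the safety condition vacuous or reducible to the inductive hypotheses.

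Before the case analysis I would record one structural observation needed for case 4: for any invocation $(f_s, r) \in \mathbf{po}_c(z)$ we have $\mathrm{infl}(f_s(r)) \subseteq \mathrm{infl}_c(z)$. This follows by induction on $z$, since the definitions of $\mathbf{po}_c(-)$ and $\mathrm{infl}_c(-)$ share the same inductive skeleton—each lifts the per-match influence $\mathrm{infl}(f_s(r))$ by union, and the sequential case recurses on the updated state $z(c)$ in both definitions. For case 1, I would note that $\preceq_{\spSeqOne{s}}$ orders invocations exactly as they are ``like-ordered'' in $\mathrm{roll}_s(c, \query{q_s})$, a consistent total order; hence no two distinct invocations are incomparable and safety holds vacuously. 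For case 3, the order $\preceq_{\spConcOne{s}}$ is the identity, so every distinct pair $(f_s,r),(f_s,r')$ is concurrent, and the hypothesis $\mathrm{infl}(f_s(r)) \cap \mathrm{infl}(f_s(r')) = \emptyset$ together with \Cref{thm:infl-comm} yields commutativity at once.

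The composite cases require identifying which same-side and cross-side pairs are concurrent. For case 2 ($z \spSeqComp z'$), I would argue that any pair with one invocation in $\mathbf{po}_c(z)$ and the other in $\mathbf{po}_{z(c)}(z')$ is ordered by the third disjunct of $\preceq_{z \spSeqComp z'}$ (with the $z$-element preceding the $z'$-element, regardless of which way the pair is written), hence not concurrent; a concurrent pair must therefore lie entirely on one side, where incomparability in the composite reduces to incomparability under $\preceq_z$ (resp.\ $\preceq_{z'}$), and commutativity follows from $z$ being safe at $c$ (resp.\ $z'$ being safe at $z(c)$). For case 4 ($z \spConcComp z'$), same-side concurrent pairs again reduce to the safety of $z$ or $z'$ at $c$; a cross-side pair is now genuinely concurrent, but by the containment observation its two influences sit inside $\mathrm{infl}_c(z)$ and $\mathrm{infl}_c(z')$, which are disjoint by hypothesis, so \Cref{thm:infl-comm} applies.

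The main obstacle I anticipate is the bookkeeping in the composite cases: carefully justifying that restricting $\preceq_{z \spSeqComp z'}$ (resp.\ $\preceq_{z \spConcComp z'}$) to same-side pairs recovers exactly $\preceq_z$ (resp.\ $\preceq_{z'}$), so that ``concurrent in the composite'' coincides with ``concurrent in the component.'' This hinges on the fact that $\preceq_{z'}$ never relates two elements of $\mathbf{po}_c(z)$, and symmetrically—immediate from each component ordering being defined only on its own invocation set, but worth stating explicitly to rule out spurious cross-orderings. A minor point to address is whether the two invocation sets might overlap; since an invocation is never concurrent with itself, any shared element is harmless.
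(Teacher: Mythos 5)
Your proposal is correct and follows essentially the same route as the paper's proof: the same four-way case analysis, with cases 1 and 2 discharged by showing the relevant pairs are never concurrent (vacuity, then reduction to the component schedules' safety at $c$ and $z(c)$), and cases 3 and 4 discharged by applying \Cref{thm:infl-comm} to disjoint influences. The only difference is that you make explicit the containment $\mathrm{infl}(f_s(r)) \subseteq \mathrm{infl}_c(z)$ and the restriction-of-$\preceq$ bookkeeping, which the paper's terser proof leaves implicit---a reasonable amount of added rigor, not a different argument.
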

\begin{proof}
    By cases:
    \begin{enumerate}
        \item Since \(\spSeqOne{s}\) has no concurrent pairs of system function invocations, it is vacuously safe.
        \item Any two concurrent invocations in \(z \spSeqComp z'\) are either both in \(z\) or both in \(z'\), which are safe at their respective states, and since there are no other pairs to consider, \(z \spSeqComp z'\) is safe at $c$.
        \item Since the pairs of concurrent invocations in $\spConcOne{s}$ yield pairs of mutations that satisfy \Cref{thm:infl-comm}, \(\spConcOne{s}\) is safe at $c$.
        \item
            Any two concurrent invocations in ${z}\spConcComp{z'}$ that are both in $z$ or both in $z'$ are safe because $z$ and $z'$ are safe at $c$.
            For pairs of concurrent invocations that straddle $z$ and $z'$, since the influences of $z$ and $z'$ are disjoint, every pair  of invocations yields a pair of mutations that satisfy \Cref{thm:infl-comm}.
            Therefore \(z \spConcComp z'\) is safe at $c$.
        \qedhere
    \end{enumerate}
\end{proof}

Any schedule constructed by the rules in \Cref{thm:infl-safe-schedule} is safe, and therefore by \Cref{thm:safe-is-deterministic} is deterministic.
However, in practice, the full force of \Cref{thm:infl-safe-schedule} requires understanding the influence of each system under all possible runtime inputs, which can be a nontrivial analysis.
We close this section with two corollaries of \Cref{thm:infl-safe-schedule} that can be judged statically.

\begin{corollary}
    \label{coro:det-conc-sched}
    If schedules \(z\) and \(z'\) are safe and influence disjoint sets of component labels, then \(z \spConcComp z'\) is safe (and thus deterministic) on any state $c$.
\end{corollary}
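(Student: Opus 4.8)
The plan is to discharge this corollary as an immediate specialization of \Cref{thm:infl-safe-schedule}(4), with the determinism half following from \Cref{thm:safe-is-deterministic}. The only genuine work is to translate the hypothesis ``influence disjoint sets of component labels'' into the full influence-set disjointness that \Cref{thm:infl-safe-schedule}(4) requires as its second premise.

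First I would make the hypothesis precise. Every influence set $\mathrm{infl}_c(z)$ is by definition a subset of $E \times I$, so I read ``$z$ influences component labels in $I_z$'' as the statement that the projection of $\mathrm{infl}_c(z)$ onto its second ($I$) coordinate lands in $I_z$, uniformly over all states $c$; this uniformity is exactly what makes the condition statically judgeable rather than dependent on a particular state. The hypothesis then furnishes label sets $I_z, I_{z'} \subseteq I$ with $I_z \cap I_{z'} = \emptyset$ such that $\mathrm{infl}_c(z) \subseteq E \times I_z$ and $\mathrm{infl}_c(z') \subseteq E \times I_{z'}$ for every $c$. I would also read the premise that $z$ and $z'$ are ``safe'' (stated without a state) as safety at every state, so that in particular both are safe at whichever $c$ I consider.

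Next I would fix an arbitrary state $c$ and argue $\mathrm{infl}_c(z) \cap \mathrm{infl}_c(z') = \emptyset$ purely from the product structure: were some pair $(e,i)$ to lie in both influence sets, then $i$ would belong to both $I_z$ and $I_{z'}$, contradicting $I_z \cap I_{z'} = \emptyset$. With this disjointness established and $z, z'$ safe at $c$, both premises of \Cref{thm:infl-safe-schedule}(4) are met, so $z \spConcComp z'$ is safe at $c$; applying \Cref{thm:safe-is-deterministic} then yields determinism at $c$. Since $c$ was arbitrary, the conclusion holds on any state.

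The main obstacle here is conceptual rather than technical: I must be careful that ``influence disjoint component labels'' is genuinely a state-independent, statically checkable property, and that it entails concrete influence-set disjointness at every state. Once the product-projection observation is in hand there is no hard mathematical content — the corollary is essentially \Cref{thm:infl-safe-schedule}(4) read ``one column of $E \times I$ at a time,'' chained with \Cref{thm:safe-is-deterministic}.
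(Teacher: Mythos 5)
Your proposal is correct and matches the paper's intended argument exactly: the paper presents this corollary as a direct consequence of case 4 of \Cref{thm:infl-safe-schedule} (chained with \Cref{thm:safe-is-deterministic}), with the only substantive step being precisely your projection observation that disjoint label sets force $\mathrm{infl}_c(z) \cap \mathrm{infl}_c(z') = \emptyset$ at every state, regardless of which entities are touched. Your explicit formalization of the state-independence of the hypotheses is a careful reading of what the paper leaves implicit, but it is the same proof.
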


\begin{corollary}
    \label{coro:det-conc-one}
    For a system \(s\), if \(q_s = \langle q_1 \rangle\) is a singleton-vector and the component types of \(\queryT{q_1}\) do not reference the entity type \(E\), then \(\spConcOne{s}\) is safe (and thus deterministic) on any state $c$.
\end{corollary}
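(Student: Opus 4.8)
The plan is to invoke case~(3) of \Cref{thm:infl-safe-schedule}, which reduces the goal to showing that for every distinct pair of entity matches $r$ and $r'$ in $\query{q_s}$, the influences $\mathrm{infl}(f_s(r))$ and $\mathrm{infl}(f_s(r'))$ are disjoint; safety then upgrades to determinism by \Cref{thm:safe-is-deterministic}. First I would unpack the singleton hypothesis: since $q_s = \langle q_1 \rangle$, the query-vector result $\query{q_s}$ collapses to the ordinary partial mapping $\query{q_1}$ of type $E \mathbin{\dot\rightarrow} \queryT{q_1}$, so each entity match has the form $r = (e, w)$ with a \emph{single} entity $e$ and component result $w : \queryT{q_1}$. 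Because a partial mapping assigns at most one value per key, two distinct matches $r = (e, w)$ and $r' = (e', w')$ must have $e \neq e'$.

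The heart of the argument is a bounding lemma: $\mathrm{infl}(f_s(e, w)) \subseteq \{e\} \times I$. I would prove it by reasoning about the provenance of entity identifiers rather than by induction on the produced mutation (which is not directly available, since $f_s$ is an arbitrary sublanguage term). Entity identifiers inhabit the opaque parameter $E$ and cannot be fabricated by the sublanguage, so the only entities available inside $f_s$ are the input entity $e$, entities read out of the component result $w$, and fresh entities introduced by $\mutfresh{-}$. The hypothesis that $\queryT{q_1}$ does not reference $E$ eliminates the second source entirely, so $w$ carries no entities. Every $\mutattach{i}{e}{-}$ or $\mutdetach{i}{e}$ on the input entity contributes only pairs in $\{e\} \times I$. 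For the fresh case, the definition $\mathrm{infl}(\mutfresh{f}) = \bigcap_{d \in E} \mathrm{infl}(f(d))$ intersects over all fresh choices: any pair whose entity coordinate is the fresh entity varies with $d$ and is annihilated by the intersection, while pairs over the fixed entity $e$ survive. This confines the influence to $\{e\} \times I$.

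With the lemma established, $\mathrm{infl}(f_s(r)) \subseteq \{e\} \times I$ and $\mathrm{infl}(f_s(r')) \subseteq \{e'\} \times I$, and since $e \neq e'$ these are disjoint. Case~(3) of \Cref{thm:infl-safe-schedule} then gives safety of $\spConcOne{s}$ at an arbitrary state $c$ (the disjointness bound holds for whatever matches $\query{q_s}$ produces, so $c$ is immaterial), and \Cref{thm:safe-is-deterministic} delivers determinism.

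I expect the bounding lemma to be the main obstacle, since it is the step that actually consumes both hypotheses: the singleton requirement guarantees distinct matches have distinct entities (contrast the collision system, where the stationary entity recurs across matches), while the entity-free result type is what prevents $f_s$ from directing a mutation at some other live entity read out of $w$. The subtle point is the treatment of $\mutfresh{-}$: I must confirm that the intersection over fresh choices cancels exactly the fresh-entity footprint while preserving any footprint on $e$, and that this persists through nested occurrences of $\mutfresh{-}$. Making the informal ``provenance of entities'' argument rigorous --- ideally via a structural induction over sublanguage terms tracking which entity values a term can denote --- is where the real care is needed.
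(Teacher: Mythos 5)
Your proposal is correct and takes essentially the same route as the paper: the paper's own justification of \Cref{coro:det-conc-one} (given only informally, with no separate proof environment) is precisely that the singleton query vector guarantees a distinct entity per invocation, that entity-free component types prevent entities from entering an invocation via component values, and hence the disjoint-influence condition of case (3) of \Cref{thm:infl-safe-schedule} holds, with \Cref{thm:safe-is-deterministic} supplying determinism. Your bounding lemma $\mathrm{infl}(f_s(e,w)) \subseteq \{e\} \times I$, including the observation that the fresh-entity footprint is cancelled by the intersection in $\mathrm{infl}(\mutfresh{f})$, is a faithful formalization of exactly that argument, and your closing caveat (that rigor ultimately rests on the opacity of $E$ to the sub-language) matches the assumption the paper itself makes implicitly.
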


\Cref{coro:det-conc-one} says a system that only ever influences one entity per invocation, run concurrently with itself, won't influence the same entity in two of the invocations.
By restricting the components, no entity enters a system invocation via component values.
By restricting the query vector to one query, a distinct entity enters each system invocation.
This is possible to judge statically by examining the the fixed query vector for a system and the component types in the schema $K$.
Use of a system in this way is a parallel map!

Meanwhile \Cref{coro:det-conc-sched} says that if two sub-schedules work over entirely disjoint sets of component labels, then it doesn't matter whether they influence the same entities --- their influences must necessarily be disjoint.
A benefit of the ECS pattern is that an existing ECS program can be easily extended with new systems and components in response to design changes, and so this scenario arises frequently in practice.
Indeed, it is recognizably the act of running distinct tasks on distinct regions of memory.

It may seem obvious that if concurrent tasks are restricted to only modify disjoint regions of memory, we will have deterministic concurrency; however, as we find in \Cref{sec:practical-ecs}, none of the ECS implementations we studied fully implement even this degree of concurrency.
Our results establish a kind of ``green light'', go-ahead signal for concurrency in ECS: there is no theoretical obstacle to achieving this degree of concurrency, only practical obstacles dependent on implementation decisions.

\section{ECS in Practice}
\label{sec:practical-ecs}
In this section we survey five prominent open-source ECS frameworks listed in \Cref{fig:ecs-frameworks}, reflecting on Core ECS for comparison.
We focus on ECS frameworks with open-source implementations that prioritize efficient execution, and biased our selection toward widely-used frameworks.
In \Cref{sec:implementations}, we discuss the five frameworks' approaches to two key implementation decisions: their approach to fresh entity generation, and their choice of component storage strategy.
With these implementation decisions in view, \Cref{sec:impl-allowed-conc} turns to a comparison of the frameworks' support for concurrency.
Our purpose in discussing the implementation strategies at length in \Cref{sec:implementations} is to inform the reader about the prioritizations and biases, common in ECS frameworks, that directly lead to the constraints on concurrency support discussed in \Cref{sec:impl-allowed-conc}.

Our overall finding is that, compared to Core ECS, all five of the frameworks we surveyed \emph{disallow} a subset of the safe schedules that \Cref{thm:infl-safe-schedule} identifies.
That is, in all five frameworks there is a gap between the degree of deterministic concurrency that the ECS pattern (as modeled by Core ECS) \emph{can} support, and the degree of deterministic concurrency that ECS frameworks \emph{do} support.
In particular, none of the five frameworks support the ability to attach a new component to an entity concurrently with another mutation.

The five frameworks we survey (in order of popularity\footnote{As indicated by GitHub stars, measured 2024 October 9.}) are:
\emph{Bevy ECS}~\citep{bevy2024ecs}, a Rust ECS framework developed for the Bevy game engine;
\emph{EnTT}~\citep{entt2024ecs}, an ECS framework for C++, known for its use in Minecraft~\citep{minecraft2024attrib};
\emph{Flecs}~\citep{mertens2024flecs}, an ECS framework for C and C++;
\emph{Specs}~\citep{schaller2023specs}, a Rust ECS framework developed for the Amethyst game engine~\citep{amethyst2021kalderon};
and \emph{apecs}~\citep{carpay2018apecs}, an ECS framework for Haskell.
These frameworks variously bill themselves as ``fast''~\citep{carpay2018apecs} (or ``[i]ncredibly fast''~\citep{entt2024ecs}),  ``massively parallel''~\citep{bevy2024ecs}, supporting multithreading with a ``fast lockless scheduler''~\citep{mertens2024flecs}, and offering ``easy parallelism'' and ``high performance''~\citep{schaller2023specs} -- claims that existing ECS benchmarking efforts have sought to validate~\citep{lschmierer2017benchmark,abeimler2024benchmark}.
Our purpose in this section, however, is not a quantitative performance assessment, but rather a qualitative comparison of design and implementation decisions made in the interest of performance.

\subsection{Implementation Techniques}
\label{sec:implementations}

We guide our exploration of implementation techniques with two questions that often drive the design of an ECS framework.
How are fresh entity identifiers generated?
What strategy is used to store components?
(One further question, how a programmer specifies the system schedule, is explored by \Cref{sec:impl-allowed-conc}.)
The way an implementation answers these questions directly influences the degree to which it can realize concurrency with parallelism (often due to the introduction mutual exclusion).
In practice the answers to these questions tend to be correlated; certain strategies for storage implicate certain strategies for indexing that storage with entity identifiers.
Moreover, the desire to minimize indirection, speeding up iteration over (and updates to) the entity-component association, is in tension with the desire to maximize efficient querying and structure changes, aspects which may benefit from indirection.

\begin{table}
    \caption{
        Summary of the ECS frameworks we surveyed.
        The ``Stars'' column is a count of GitHub stars.
        ``Entity generation'' refers to frameworks' fresh entity generation approach, as discussed in \Cref{sec:implementations}.
        ``Strategy'' refers to frameworks' component storage strategy, as discussed in \Cref{sec:implementations}.
        ``Scheduling'' refers to frameworks' system scheduling interface, as discussed in \Cref{sec:impl-allowed-conc}.
}
    \begin{tabular}{ l r l l l l }
          {\bf Framework}
        & {\bf Stars}
        & {\bf Language}
        & {\bf Entity generation}
        & {\bf Strategy}
        & {\bf Scheduling}
        \\ \hline\hline

        Bevy ECS
        & 35,6xx
        & Rust
        & Generational indexing
        & Archetype
        & Semi-automated
        \\

        EnTT
        & 10,1xx
        & C++
        & Generational indexing
        & Columnar
        & Manual\tablefootnote{
            ``In general, the entire registry isn't thread safe as it is. Thread safety isn't something that users should want out of the box for several reasons. Just to mention one of them: performance.''~\citep{caini2024multithreading}
        }
        \\

        Flecs
        & 6,4xx
        & C
        & Generational indexing
        & Archetype
        & Either\tablefootnote{
            Flecs works with or without its ``pipeline'' scheduler~\citep{mertens2025scheduler}, supporting both Semi-automated or Manual uses.
        }
        \\

        Specs
        & 2,5xx
        & Rust
        & Generational indexing
        & Columnar
        & Semi-automated
        \\

        apecs
        & 392
        & Haskell
        & Sequential numeric
        & Columnar
        & Manual\tablefootnote{
            Apecs removed concurrency support with the update from v0.4.1.1 to v0.5.0.0~\citep{carpay0.4.1.1apecsconc,carpay0.5.0.0apecs}.
            The removed functions \texttt{pmap} and \texttt{concurrency} provided intra- and inter- system concurrency, respectively. ``Provides zero protection against race conditions and other hazards, so use with caution.''~\citep{carpay0.4.1.1apecsconc}
        }
        \\

        \hline
    \end{tabular}
    \label{fig:ecs-frameworks}
\end{table}

\paragraph{Fresh entity generation}
As discussed in \Cref{sec:core-ecs}, to use the ECS pattern in some language, a programmer designates a type to represent entity identifiers, and implements some means to generate fresh elements of that type.
There are a few obvious implementation-level approaches to this, such as the use of \emph{sequential numeric identifiers}, or generating \emph{random numbers} with many bits.
While these approaches are not widely used in practice, it is helpful to think through their trade-offs before we consider the more commonly used technique, called \emph{generational indexing}.
What is the state required to generate entities?
What coordination is necessary to ensure that concurrently generated entities are unique?

Using sequential numerical identifiers to generate entities, the approach employed by apecs~\citep{carpay0.9.6nextEntity}, is reminiscent of an auto-incrementing primary key in a database table.
A fresh entity identifier is obtained by incrementing a shared global variable.
This strategy requires a minimum of state and only modest coordination when generating new entities:
    A multithreaded framework may use an atomic integer to generate distinct entities concurrently.

Random identifiers require no state nor coordination to ensure that concurrently generated entities are unique.
A fresh entity identifier is obtained by generating a random number with enough bits of entropy to minimize the possibility of a collision.
This approach eliminates state and coordination at the expense of necessitating a larger identifier size.

Generational indexing is a more sophisticated strategy in which an entity identifier consists of an index part and a generation part~\citep{west2018using,slucas2021generational}.
Flecs, Specs, EnTT, and Bevy ECS all generate fresh entities using generational indexing~\citep{mertens2020entity,schaller2023specs,caini2024entity,bevy2024entity}.
In this approach, a fresh entity identifier is either produced by recycling a previously used index or minting a new one.
This strategy requires more state and coordination than the other strategies we mention, but enables higher occupancy in a smaller allocation:
Array indexes generated this way help to reuse the allocated space in arrays (as compared with sequential numerical identifiers) by reducing the frequency of array re-allocations that require global coordination.
As such, generational indexing presents another trade off, moving coordination from those arrays (component storage) to fresh entity generation.

\paragraph{Component storage strategies}

\begin{wrapfigure}[]{R}{0.5\textwidth}
    \vspace{-1.5em}
    \begin{subfigure}[t]{0.48\linewidth}
        \centering
        \includesvg[width=0.8\linewidth]{storage-archetype.gv.svg}
        \caption{
            Archetypes are ad hoc regions filled with entities having like-components.
        }
        \label{fig:storage-archetype}
    \end{subfigure}
    \hfill
    \begin{subfigure}[t]{0.48\linewidth}
        \centering
        \includesvg[width=0.8\linewidth]{storage-columnar.gv.svg}
        \caption{
            Columnar frameworks maintain components separately by label.
        }
        \label{fig:storage-columnar}
    \end{subfigure}
    \vspace{-0.25em}
    \caption{
        The two main component storage strategies.
    }
    \label{fig:storage}
    \vspace{-1.0em}
\end{wrapfigure}

We observe two main strategies for the storage of components, calling one \emph{columnar} and the other \emph{archetype} (as seen in \citet{gillen2021legion}).

Archetype-style frameworks  logically group entities by the set of component labels that they have, and place the component values for each logical group (each \emph{archetype}) in the same region of memory.
For example, in \Cref{fig:storage-archetype}, $(\mathbf{Pos}, \mathbf{Vel})$ is an archetype for entities with both a $\mathbf{Pos}$ and a $\mathbf{Vel}$ component.
Flecs and Bevy ECS are archetype-style frameworks~\citep{mertens2020archetype,bevy2024archetype}.
This strategy facilitates the execution of queries, because examination of the component labels in each archetype immediately establishes the groups of entities that match the query.
This strategy may also facilitate cache locality in a tight loop over those entities, if the component values for each entity are stored together.
However, the cost of moving component data between regions when an entity's set of components changes may be a disadvantage.

Columnar-style frameworks group component values with the same component label together, enabling distinct storage styles for distinct component labels (\Cref{fig:storage-columnar}).
EnTT, Specs, and apecs are columnar-style frameworks~\citep{caini2024column,schaller2023specs,carpay0.9.6makeWorld}.
This strategy minimizes the cost of updating the entity-component association by requiring changes in only one logical column; however, it may suffer from poor locality when accessing many component values of a single entity in a tight loop over entities.
There are many concrete column storage styles and we list only a few here for flavor.
\begin{itemize}[leftmargin=1.5em]
    \item
        \textbf{Array} ---
        By obtaining an index from an entity identifier, component data can be stored in an array.
        A sentinel value or bit-mask may indicate which indexes are in use (i.e. which entities have the given component).
        Unused indexes contribute to fragmentary memory use.
        Paired with generational indexing, array reallocation and fragmentation can be minimized.

    \item
        \textbf{Dense array} ---
        Adding indirection between entity identifiers and component array indexes (e.g. using a sparse-set structure) may alleviate the problem of fragmentary memory use and reduce the preference for generational indexing.

    \item
        \textbf{Map} ---
        Choosing an off-the-shelf hash-map or tree-map abstracts the problem of storage completely from the strategy for indexing, at potential cost of iteration speed.

\end{itemize}
The many trade-offs between component storage are also influenced by the use cases of different applications, and thus extend beyond the scope of our purpose here. 

\subsection{Practically Available Concurrency}
\label{sec:impl-allowed-conc}

To explore the concurrency available in the ECS frameworks of \Cref{fig:ecs-frameworks}, we compare the expressiveness of system scheduling in those frameworks with that of Core ECS schedules (\Cref{sec:core-ecs-schedule}).
We find that there are broadly two approaches to system scheduling interfaces in the ECS frameworks we examine: \emph{manual} and \emph{semi-automated}.
Our examples in this section will use the inertia ($\beta$) and collision ($\delta$) systems from our toy physics simulation example in \Cref{sec:ecs-background,sec:core-ecs}, and for demonstration purposes we throw in two more systems: a rotation system (which we call $\gamma$) and a render system (which we call $\eta$).

\paragraph{Manual Scheduling}
Frameworks we labeled ``manual'' lack a first-class concept of scheduling systems.
Instead the programmer manually schedules systems by directly invoking them on ECS state in the desired order, typically in the context of an outer ``main loop''.
Each system function expresses a query over the ECS state and iterates sequentially over entity matches, using facilities provided either by the ECS framework or by the host language.
Among the frameworks we surveyed, this is the approach taken by
apecs~\citep{carpay2019schedule} (which removed its support for concurrency), EnTT~\citep{kernick2020schedule} (for which documentation implies that concurrency is not an intended use case), and Flecs~\citep{mertens2025scheduler} (when used without its ``pipeline'' scheduler or a parallel job system).
Such lightweight scheduling may be straightforward and incur little overhead, but it places the burden of scheduling, and thus of correctly managing concurrency, if any, on the programmer.
In practice, this amounts to sequential schedules composed of $\spSeqOne{-}$ and $({-}\spSeqComp{-})$, as in the following Core ECS schedule at left, and corresponding hypothetical EnTT schedule at right.
\vspace{0.3em}
\\ % first column
\begin{minipage}{0.5\linewidth}
    $$\begin{aligned}
         \spSeqOne{\beta}  \spSeqComp
         \spSeqOne{\gamma} \spSeqComp
         \spSeqOne{\delta} \spSeqComp
         \spSeqOne{\eta}
    \end{aligned}$$
\end{minipage}
% second column(schedule influence)
\begin{minipage}{0.5\linewidth}
    \begin{mdframed}[backgroundcolor=gray!15,linewidth=0]
        \begin{small}
        \begin{verbatim}
inertiaSystem(state);
rotationSystem(state);
collisionSystem(state);
renderSystem(state);
\end{verbatim}
        \end{small}
    \end{mdframed}
\end{minipage}
% end columns with paragraph sep

\paragraph{Semi-Automated Scheduling}
Frameworks labeled ``semi-automated'' are varied, but one commonality is that system functions are invoked by a framework main loop that the programmer configures.
Inter-system concurrency may be realized with parallelism, or by running those systems in an order determined by the scheduler.
Specs, Bevy ECS, and Flecs (with its ``pipeline'' scheduler)
offer this kind of semi-automated scheduling.
Such schedules may consist of a sequential spine of $({-}\spSeqComp{-})$ that orders concurrent groups of systems composed with $({-}\spConcComp{-})$.
We evoke this pattern with the following Core ECS schedule at left, and corresponding hypothetical Bevy ECS schedule at right.
\vspace{0.3em}
\\ % first column
\begin{minipage}{0.5\linewidth}
    $$\begin{aligned}
         &\left(\spConcOne{\beta} \spConcComp \spSeqOne{\gamma}\right) \spSeqComp
       \\&\spSeqOne{\delta} \spSeqComp
       \\&\spSeqOne{\eta}
    \end{aligned}$$
\end{minipage}
% second column(schedule influence)
\begin{minipage}{0.5\linewidth}
    \begin{mdframed}[backgroundcolor=gray!15,linewidth=0]
        \begin{small}
        \begin{verbatim}
let mut z = Schedule::default();
z.add_systems((
    (inertia, rotation),
    collision,
    render
  ).chain());
\end{verbatim}
    \end{small}
    \end{mdframed}
\end{minipage}
% end columns with paragraph sep

In both scheduling interfaces, the option to use intra-system concurrency is usually internalized to the implementation of those systems rather than being made explicit when they are scheduled.
For example, it is possible to achieve intra-system concurrency, as in $\spConcOne{\beta}$, via parallel iteration.
One common idiom is a framework-provided parallel-map function that takes a function to call once per entity match.
We show an example of this idiom as it appears in Bevy ECS in \Cref{fig:bevy-inertia}.
This idiom also highlights a difference between Core ECS and the frameworks we investigated: Core ECS relocates intra-system concurrency from the system to the schedule.

\begin{figure}
    \begin{mdframed}[backgroundcolor=gray!15,linewidth=0]
        \begin{small}
            \begin{verbatim}
fn inertia(mut q: Query<(&mut Pos, &Vel)>) {
  q.par_iter_mut().for_each(|(mut p, Vel(v))| { p.0 += v; });
}
\end{verbatim}
        \end{small}
    \end{mdframed}
    \caption{
        A Bevy ECS inertia system, expressed as a Rust function, that demonstrates intra-system concurrency, as in $\spConcOne{\beta}$, and exclusive access to (logical) stores.
    }
    \label{fig:bevy-inertia}
\end{figure}

Continuing with the example in \Cref{fig:bevy-inertia}, we observe that support for concurrency in the scheduling interfaces of Flecs (with ``pipeline''), Specs, and Bevy ECS are all designed to avoid \emph{write conflicts}.
For the two Rust frameworks, Bevy ECS and Specs, this avoidance is achieved by shallowly embedding the problem into the borrow checker:
In \Cref{fig:bevy-inertia} the position store is mutably borrowed by the inertia system in \texttt{\&mut Pos}, and the velocity store is immutably borrowed by \texttt{\&Vel}.
No concurrent system may borrow the position store, and no concurrent system may mutably borrow the velocity store.
The documentation for Bevy ECS explicitly addresses this restriction:
``Not all systems can run together: if a system mutably accesses data, no other system that reads or writes that data can be run at the same time. These systems are said to be \textbf{incompatible}.''~\citep{bevy2024systemcompat}

Furthermore, in all three of the multithreaded frameworks we investigated (Flecs, Specs, and Bevy ECS), it is not possible to attach a new component to an entity in any concurrent setting.
Flecs documentation provides us with a name for this problem: a \emph{structure change} is altering the set of components attached to an entity.
In the Flecs authors' words, ``By default systems are ran while the world is in `readonly' mode, where all ECS operations are enqueued as commands. Readonly here means that structural changes, such as changing the components of an entity are deferred.''~\citep{mertens2024structurechange}

All three of the multithreaded ECS frameworks in \Cref{fig:ecs-frameworks} provide the same two workarounds for the restrictions imposed to avoid write conflicts and concurrent structure changes.
\begin{enumerate}
    \item Deferring modifications to entities, by storing them in a buffer to be applied sequentially by the ECS framework later (called \emph{lazy updates} by Specs~\citep{schaller2022lazyupdate} and \emph{parallel commands} by Bevy ECS~\citep{bevy2024parallelcommands}) allows any modification to be expressed in a concurrent context.
    \item Running a system in a single thread with exclusive access to ECS state (called an \emph{immediate system} in Flecs and an \emph{exclusive system} in Bevy ECS) allows it to make any modifications in any component store.
\end{enumerate}
Of these, only deferred modifications (1) occur in concurrent contexts, and so we ignore exclusive systems (2) in further discussion.

The well-intentioned attempts by the authors of these frameworks to provide concurrency free of data races results in a zoo of mutation categories, which we catalog in \Cref{fig:concurrency-slice-and-dice}.
With each mutation category we provide an example Core ECS system which, if translated to Flecs, Specs, or Bevy ECS, would demonstrate a distinct variety of mutation in that framework (though some categories do not have a translation to some frameworks).
As in \Cref{fig:bevy-inertia}, a component that appears in the query vector of these example systems can be regarded as owned (or borrowed mutably) in a translation.
We ask the reader to consider whether and how each example may be translated.

\begin{table}
    \caption{
        Categories of mutations having implications for concurrent execution in the multithreaded ECS frameworks we studied.
        Those that would influence the same component store that they query are regarded as owning that store.
        When a store is not owned, influence against it must be deferred.
        Insertions and deletions are both structure changes which, respectively, potentially allocate or are assumed to not allocate.
    }
    \begin{tabular}{ l l l }
          {\bf Mutation Category}
        & {\bf Example System}
        & {\bf Side condition}
        \\ \hline\hline

        Owned update
        & $\langle\qincl{\mathbf{Pos}}\rangle,\ \lambda (e,\ p).\;\mutattach{\mathbf{Pos}}{e}{0}$
        \\

        Owned insert
        & $\langle\qexcl{\mathbf{Pos}}\rangle,\ \lambda (e,\ p).\;\mutattach{\mathbf{Pos}}{e}{0}$
        & 
        \\

        Owned initialize
        & $\langle\qanyway{\mathbf{Pos}}\rangle,\ \lambda (e,\ p).\;\mutfresh{\lambda e'.\;\mutattach{\mathbf{Pos}}{e'}{0}}$
        & 
        \\

        Owned delete
        & $\langle\qincl{\mathbf{Pos}}\rangle,\ \lambda (e,\ p).\;\mutdetach{\mathbf{Pos}}{e}$
        \\

        Deferred update
        & $\langle\qanyway{\mathbf{Pos}}\rangle,\ \lambda (e,\ p).\;\mutattach{\mathbf{Vel}}{e}{1}$
        & $e$ has $\mathbf{Vel}$
        \\

        Deferred insert
        & $\langle\qanyway{\mathbf{Pos}}\rangle,\ \lambda (e,\ p).\;\mutattach{\mathbf{Vel}}{e}{1}$
        & $e$ does not have $\mathbf{Vel}$
        \\

        Deferred initialize
        & $\langle\qanyway{\mathbf{Pos}}\rangle,\ \lambda (e,\ p).\;\mutfresh{\lambda e'.\;\mutattach{\mathbf{Vel}}{e'}{1}}$
        & 
        \\

        Deferred delete
        & $\langle\qanyway{\mathbf{Pos}}\rangle,\ \lambda (e,\ p).\;\mutdetach{\mathbf{Vel}}{e}$
        & $e$ has $\mathbf{Vel}$
        \\

        \hline
    \end{tabular}
    \label{fig:concurrency-slice-and-dice}
\end{table}

In the multithreaded ECS frameworks that we studied, only the first mutation category in \Cref{fig:concurrency-slice-and-dice} (Owned update) is possible to execute concurrently.
The four ``deferred'' mutation categories are nominally parallel according to the frameworks' documentation, but factually they are accumulated in a buffer and applied to state serially after system execution completes.
The remaining three ``owned'' mutation categories (Owned insert, Owned initialize, and Owned delete) are, in concurrent contexts, only possible to express as the corresponding deferred structure changes.
These owned structure changes may cause allocation in a component store (insert, initialize) or in fresh entity structures (initialize).

If the premises of safe schedule construction in \Cref{thm:infl-safe-schedule} are observed, any of the mutation categories in \Cref{fig:concurrency-slice-and-dice} are safe to execute concurrently, and will produce deterministic behavior.
In principle, then, it would seem that frameworks would let an ECS program express any safe schedule, and in \Cref{coro:det-conc-sched,coro:det-conc-one} we identify two subsets of safe schedules that can be checked statically.
It is clear that a structure change involving one component, executed concurrently with mutations at different components, is deterministic because --- as in the frame rule in a concurrent separation logic --- these actions deal with disjoint regions of memory.
Yet none of the multithreaded ECS frameworks we investigate fully support \Cref{coro:det-conc-sched} by allowing concurrent structure changes.
These practical frameworks disallow obviously correct forms of concurrency, reflecting a bias toward domains where the structure of entities does not change frequently.

\section{An Executable Model}
\label{sec:executable-model-new}
We have implemented an \emph{executable model} of Core ECS as described in \Cref{sec:core-ecs}, and this section describes it briefly.
The executable model is intended to facilitate experimentation with Core ECS and exploration of the behavior of Core ECS programs; it should not be considered a substitute for a user-ready, production-quality ECS implementation, such as those discussed in \Cref{sec:practical-ecs}.
We have included a discussion of several example ECS programs written against our executable model in
\ifdefined\ARXIVVERSION
% arXiv version: cite appendix
\Cref{apx:executable-examples}.
\fi
\ifdefined\PACMPLVERSION
% PACMPL version: cite Appendix B of arXiv version
\citet[Appendix B]{coreecs-extended}.
\fi
The full implementation of the executable model and these example programs are in our accompanying artifact~\citep{coreecsartifact}.

The executable model faithfully represents the expressiveness of Core ECS with negligible differences.
The executable model is written in Haskell, and the underlying programming language used to write system functions for it is also Haskell.
The type of entities, $E$, is fixed to Haskell's \texttt{Int}.
The component schema, $K$, is given as a type-level list of component types; each type is interpreted as a component label referring to the component values that inhabit that type.
Accordingly, queries are entirely represented at the type level, and it is recommended to use newtypes to give meaningful handles to components.

Recall that in \Cref{sec:implementations} we discuss how conventionally-chosen data structures for entity allocation and component storage both drive a need for mutual exclusion, and in \Cref{sec:impl-allowed-conc} we discuss approaches used by practical ECS frameworks for scheduling systems.
Now let us briefly consider the executable model in those terms, against those frameworks in \Cref{fig:ecs-frameworks}:
\begin{itemize}[leftmargin=1.5em]
    \item
        \textbf{Entity generation} ---
        The executable model uses \emph{sequential numeric} identifiers and does not reuse identifiers from deleted entities.
    \item
        \textbf{Strategy} ---
        The executable model component storage is columnar.
        In particular it uses a two-level mapping:
        An association-list relates component types to stores (the ``columns'').
        A store is a Haskell \texttt{IntMap} optionally relating each entity (identified by an \texttt{Int}) to a component value of the store type.
    \item
        \textbf{Scheduling} ---
        The interface of the executable model follows Core ECS, which is neither ``manual'' nor ``semi-automated''.
        A programmer writes a schedule that specifies which parts of the ECS program should be concurrent, and that schedule may be interpreted in different ways. The executable model provides two schedule interpreter functions, described below.
\end{itemize}

Our executable model includes two reference implementations of schedule application.
The first implementation interprets a schedule exactly as the function $\appStSched{c}{z}$ defined in \Cref{sec:core-ecs-schedule} does, for some state $c$ and schedule $z$.
This reference implementation fixes the ``consistent but unspecified total order'' (\Cref{sec:core-ecs-matches}) of entity matches to the lexicographic order of entity vectors.
That is, in this reference implementation, concurrency is \emph{not} implemented with parallelism, but with an arbitrary ordering of entity matches.

By contrast, the second implementation of schedule application uses Haskell threads and a shared mutable reference to the state.
Every mutation produced by an invocation of a system function is immediately applied to the state under mutual exclusion.
By implementing concurrency with parallelism, we achieve a relaxed ordering of mutations described in \Cref{sec:core-ecs-conc}, dependent on scheduler non-determinism.

With either implementation of schedule application, the executable model will be deterministic exactly when the schedule is safe, according to \Cref{thm:infl-safe-schedule}.

\section{Related Work}
\label{sec:related}
\paragraph{Relational database management systems}
It is commonly observed that the ECS pattern is in effect a very narrow and regimented use of a relational database management system (RDBMS)~\citep{bilas2002data,martin2007entitypart3,mertens2023databases,gutekanst2022databases,borisova2024bevydata}.
The entity-component association can be encoded directly with a single RDBMS table using the ``entity attribute'' schema.
A columnar layout can be encoded by storing each component type in a separate single-column table with a foreign key that references an entity table.
Through use of dynamically created multi-column tables, an archetype layout can also be encoded.
Given any such layout, ECS systems are possible to encode with a select query followed by an update query, possibly with an embedded third-party programming language in between, all persisted via stored procedures.
Finally, triggers could be set up to activate the stored procedures based on their select queries.

Despite these subsuming similarities, the historical development of the ECS pattern, on underpowered hardware in which all memory is allocated to the ECS program, emphasizes that it is distinct from RDBMS~\citep{bilas2002data,west2018using}.
Historically and presently, ECS frameworks are designed for efficiency first, and this focus leads to a few differences from RDBMSs:
The ECS pattern is not concerned with persistent storage.
It is common for the entity-component association to be represented plainly as a struct of arrays~\citep{west2018using,mertens2024flecs,schaller2023specs}.
Furthermore, the ECS pattern focuses on a small schedule of queries that are fixed during runtime, and so sidesteps the whole question of query optimization characteristic of RDBMSes.

\paragraph{Deterministic concurrent programming models}
\label{sec:related-detconc}

There is a very long tradition of work on abstractions for deterministic concurrent programming~\cite{Tesler-1968,Kahn-1974,IStructures,dph,dpj-oopsla,flowpools,lvars}.
In general, deterministic concurrent programming models must somehow restrict access to mutable shared state.
Our determinism result in \Cref{sec:det-conc} depends on the fact that concurrent tasks only access \emph{disjoint} state.
This approach to deterministic concurrency is similar to that taken by Deterministic Parallel Java~\citep{dpj-oopsla, dpj-popl}.
We are not aware of any other work on determinism of ECS programs specifically, but given the disjointness condition, our determinism result is conceptually straightforward.
Abstractions such as LVars~\citep{lvars, lvish}, on the other hand, do allow concurrent tasks to access overlapping state in arbitrary order, but retain determinism by carefully restricting \emph{how} the state can be updated and queried.
In future work, it would be interesting to investigate how the ECS programming model could be combined with the approach taken by deterministic concurrency abstractions that allow some degree of (well-behaved) overlap in the memory footprint of tasks, rather than the total disjointness that we currently consider.

\paragraph{Multiset Rewriting}
Rewriting logics are an established family of computational logics characterized by a set of rules $R$ which apply to elements of an equational theory $(\Sigma, E)$~\citep{meseguer2012twenty}.
With appropriate constructors $\Sigma$ and algebraic identities $E$, one obtains a multiset rewriting logic~\citep{meseguer1990logical}.
Rewriting logics may be seen to represent either computation or deduction.
Similarly, multiset rewriting may be seen to represent concurrent or distributed computation.
With the very general framework of multiset rewriting, it is possible to encode the ECS pattern.
ECS state can be encoded as a set (further constraining the multiset) of 3-tuples, each containing an entity identifier, a component label, and a component value.
ECS systems can be encoded as rules, which selectively apply themselves only to those tuples matching an encoding of their query semantics.
To specify a schedule by which these rules are applied, it is necessary to instead package them into a single rule encoding the desired sequencing and concurrent application of the system sub-rules.
There is quite a lot of machinery necessary to add atop a multiset rewriting logic to achieve the semantics of the ECS pattern as described by Core ECS, however, the overall shape of multiset rewriting is a generalization of what Core ECS expresses.

\paragraph{Join calculus}
The join calculus of \citet{fournet1996reflexive,fournet2000join} is a combination of a minimal ML-inspired language with ``processes'' that may be run concurrently with other processes.
Running a process $P$ concurrently with another $Q$ is written with a parallel composition operator as in $P \mid Q$, similar to our notion of concurrent composition of Core ECS schedules.
Processes are not necessarily required to return a value and hence are not ML expressions, much like Core ECS schedules do not return a value and are not expressions in the underlying programming language.
Via a syntax sugar over $\mathbf{let}$ expressions, a side-effecting ML expression $E$ may be sequenced with either an expression or a process $P$, as in $E\mathbin{;}P$, which is similar to our notion of sequential composition of schedules, except that it interlaces ML expressions with processes.

Indeed, the join calculus allows process abstractions to be defined and run within expressions, and functions containing ML expressions to be defined and run within processes.
This sort of interlacing of the two domains is unlike Core ECS, which maintains a schedule at the top level, with expressions of the underlying language appearing only at the leaves within systems.

Furthermore, processes may be called from ML expressions and return results to ML expressions, communicating across these boundaries via ``channels''.
The join calculus additionally defines a variety of ``pattern-matching'' ``inter-process synchronization'' primitives.
These are unlike Core ECS in that schedules neither receive input from expressions, nor return results, nor synchronize with each other --- all communication between systems in Core ECS is in the form of updates to the entity-component association.

\section{Conclusion and Future Work}
\label{sec:conclusion}
\label{sec:future-work}

We have presented Core ECS, a formal model for the ECS software design pattern that abstracts away from the implementation details of specific ECS frameworks to reveal what we believe are the essential characteristics of the ECS pattern.
We precisely characterized concurrency in the ECS pattern using Core ECS as its model, we provided rules of construction for well-behaved concurrent ECS programs, and proved that those programs are deterministic --- invocations of systems that mutate disjoint state are safe to execute concurrently.
Our determinism result suggests that the ECS pattern can be viewed as a general deterministic concurrent programming model.
While it is unsurprising that concurrent mutations of disjoint state give rise to determinism, through our result we identified that only a small part of the deterministic concurrency available in the ECS pattern is also available in commonly used ECS frameworks.
By identifying where ECS frameworks restrict concurrency unnecessarily, our result can guide the design of new ECS frameworks that do not require such restrictions, opening up opportunities for efficient and correct concurrent programming.
There are several possible directions for future work; here, we highlight two of them:

\paragraph{Quantitative evaluation of alternative parallel ECS implementation techniques}
Through our comparison of deterministic concurrency in Core ECS with that available in multithreaded ECS frameworks (\Cref{sec:impl-allowed-conc}), we identified forms of concurrency not served by the common existing implementation techniques for the ECS pattern.
While it is notable that frameworks' avoidance of write conflicts prevents two concurrent systems from writing to the same component store at disjoint sets of entities, 
we find it more surprising that the avoidance of concurrent structure changes prevents two concurrent systems from adding distinct components to entities, or from creating new entities with any components.
We are therefore very interested to investigate alternative implementation techniques that rely less on global structures (such as those required for generational indexing), rely less on correlated data (such as bit-masks that are often paired with arrays in columnar component storage), and in general avoid mutual exclusion.

We believe a significant runtime advantage may be available via a technique that eschews fanatical prioritization of throughput, and instead amortizes costs, reduces contention, and consequently maximizes available concurrency.
The ECS pattern has the potential to be a new concurrent programming model or even a compiler target.
We propose to implement several existing and new techniques and measure their wall-clock time and relative speedup with a suite of benchmarks more broad than traditional simulations lacking in structure changes.

\paragraph{Query expressivity.}
The querying capabilities we describe in \Cref{sec:core-ecs-queries} are intentionally minimal, to better focus on the essence of the ECS pattern.
However, this minimalism is quite limiting along a few distinct dimensions.
Most obviously, taking the cartesian product of \(n\) queries will result in \(O(x^n)\) system invocations, which scales poorly with an increasing number of queries.
In many cases a system may only do useful work on a very small number of entity matches, such as in a kinematics simulation~\citep{barnes1986hierarchical} that only concerns points that are sufficiently close.
Extending Core ECS with a more expressive query language, such as non-cartesian joins or filters over individual queries, may alleviate this issue.
Alternatively, allowing a system to perform a series of queries, each informed by the previous, may provide a completely different solution.

% acks, bib, appendix
\newpage

\section*{Data-Availability Statement}

We have made an artifact available at Zenodo that includes an executable model of Core ECS as well as examples of its use and comparable examples written against Bevy ECS \cite{coreecsartifact}.
Our executable model is described in \Cref{sec:executable-model-new} and explored in some detail in
\ifdefined\ARXIVVERSION
% arXiv version: cite appendix
\Cref{apx:executable-examples}.
\fi
\ifdefined\PACMPLVERSION
% PACMPL version: cite Appendix B of arXiv version
\citet[Appendix B]{coreecs-extended}.
\fi

\begin{acks}
This material is based upon work supported by the National Science Foundation under Grant No. 2145367. 
Any opinions, findings, and conclusions or recommendations expressed in this material 
are those of the author(s) and do not necessarily reflect the views of the National Science Foundation.
\end{acks}

\bibliographystyle{ACM-Reference-Format}
\bibliography{references}

\ifdefined\ARXIVVERSION % begin appendix content

\appendix
\crefalias{section}{appendix}

\section{Grammar of Underlying Language Used in Examples}
\label{apx:example-grammar}

The underlying language used in our running examples is a lambda calculus extended with conveniences.
Its full grammar is as follows.

\begin{mdframed}
\begin{center}
\begin{tabular}{ c r l l }
    \(n : \mathbb{Z}\)
        & \(\bnfeq\)
        & \(\dots
            \;\bnfalt\; -2
            \;\bnfalt\; -1
            \;\bnfalt\; 0
            \;\bnfalt\; 1
            \;\bnfalt\; 2
            \;\bnfalt\; \dots
            \;\bnfalt\; - n
            \;\bnfalt\; n + n'
            \;\bnfalt\; \lceil n / n' \rceil
            \)
        & Integer \\
    \(e : E\)
        & \(\bnfeq\)
        & \(e_0
            \;\bnfalt\; e_1
            \;\bnfalt\; e_2
            \;\bnfalt\;
            \dots\)
        & Entity \\
    \(b\;\quad\)
        & \(\bnfeq\)
        & \(%            e    \overset{?}{=} e'
            %\;\bnfalt\; e\not\overset{?}{=} e'
            %\;\bnfalt\;
                        n\overset{?}{    =} n'
            \;\bnfalt\; n\overset{?}{\not=} n'
            \;\bnfalt\; b\land b'
            \)
        & Bool\\
    \(d\;\quad\)
        & \(\bnfeq\)
        & \(\dots\text{not otherwise mentioned}\dots\)
        & Binding \\
    \(p\;\quad\)
        & \(\bnfeq\)
        & \(d
        \;\bnfalt\; (d,\ (d',\ \dots))
            \;\bnfalt\; (\langle d,\ \dots\rangle,
                        \langle (d',\ \dots),\ \dots \rangle)
            \)
        & Pattern\\
    \(t : T\)
        & \(\bnfeq\)
        & \( d
            \;\bnfalt\; t\ t'
            \;\bnfalt\; \lambda p.\ t
            \;\bnfalt\; m
            \;\bnfalt\; n
            \;\bnfalt\; e
            \;\bnfalt\; \mathbf{if}\ b\ \mathbf{then}\ t\ \mathbf{else}\ t'
            \)
        & Term
\end{tabular}
\end{center}
\end{mdframed}

\section{Comparison of Expressiveness in Core ECS and Bevy ECS}
\label{apx:executable-examples}

In this section we present examples of ECS programs written against our
executable model of Core ECS, and compare them to similar ECS programs
written against Bevy ECS. These comparisons illustrate that the
interface presented by Core ECS (and its executable model) allow a
programmer to freely express concurrency in an ECS program. By
comparison, the interfaces of existing ECS frameworks (with Bevy ECS
serving as their representative) constrain what concurrency can be
expressed and executed with parallelism.

The interface for our executable model of Core ECS is quite similar to
the presentation of Core ECS in \Cref{sec:core-ecs}. There are slight
differences, however, because our executable model is embedded in
Haskell. In Lst.~\ref{lst:em-api} we show the symbols and types exported
by our executable model, and describe them with comments. For brevity we
have left out type-class constraints, but they are not burdensome.

\begin{codelisting}

\caption{Interface of the Core ECS executable model in the Haskell
module \VERB|\DataTypeTok{PartialMapECS}|.}\label{lst:em-api}

\begin{Shaded}
\begin{Highlighting}[numbers=left,,]
\KeywordTok{newtype} \DataTypeTok{Entity} \OtherTok{=} \DataTypeTok{Entity} \DataTypeTok{Int} \CommentTok{{-}{-} Entities are fixed to integers in all ECS programs.}

\KeywordTok{data} \DataTypeTok{State}\NormalTok{ types }\KeywordTok{where} \CommentTok{{-}{-} State type takes the list of component types.}
\OtherTok{  (:+) ::} \DataTypeTok{IntMap}\NormalTok{ t }\OtherTok{{-}\textgreater{}} \DataTypeTok{State}\NormalTok{ ts }\OtherTok{{-}\textgreater{}} \DataTypeTok{State}\NormalTok{ (t }\OperatorTok{:}\NormalTok{ ts) }\CommentTok{{-}{-} One IntMap per component type.}
  \DataTypeTok{Metadata}\OtherTok{ ::} \DataTypeTok{Entity} \OtherTok{{-}\textgreater{}} \DataTypeTok{State}\NormalTok{ \textquotesingle{}[] }\CommentTok{{-}{-} State stores the next fresh{-}entity at the end.}

\KeywordTok{type} \KeywordTok{data} \DataTypeTok{Q} \OtherTok{=} \DataTypeTok{Q} \OperatorTok{:}\NormalTok{∧}\OperatorTok{:} \DataTypeTok{Q} \OperatorTok{|} \DataTypeTok{Incl}\NormalTok{ (}\OperatorTok{*}\NormalTok{) }\OperatorTok{|} \DataTypeTok{Excl}\NormalTok{ (}\OperatorTok{*}\NormalTok{) }\OperatorTok{|} \DataTypeTok{Anyway}\NormalTok{ (}\OperatorTok{*}\NormalTok{) }\CommentTok{{-}{-} Queries are type{-}level.}

\KeywordTok{data} \DataTypeTok{M}\NormalTok{ types }\KeywordTok{where} \CommentTok{{-}{-} Mutation type takes the list of component types.}
  \DataTypeTok{Attach}\OtherTok{ ::} \DataTypeTok{Entity} \OtherTok{{-}\textgreater{}}\NormalTok{ i }\OtherTok{{-}\textgreater{}} \DataTypeTok{M}\NormalTok{ ts}
  \DataTypeTok{Detach}\OtherTok{ ::} \DataTypeTok{Entity} \OtherTok{{-}\textgreater{}} \DataTypeTok{Proxy}\NormalTok{ i }\OtherTok{{-}\textgreater{}} \DataTypeTok{M}\NormalTok{ ts}
\NormalTok{  (}\OperatorTok{:}\NormalTok{•)}\OtherTok{ ::} \DataTypeTok{M}\NormalTok{ ts }\OtherTok{{-}\textgreater{}} \DataTypeTok{M}\NormalTok{ ts }\OtherTok{{-}\textgreater{}} \DataTypeTok{M}\NormalTok{ ts}
  \DataTypeTok{Nu}\OtherTok{ ::}\NormalTok{ (}\DataTypeTok{Entity} \OtherTok{{-}\textgreater{}} \DataTypeTok{M}\NormalTok{ ts) }\OtherTok{{-}\textgreater{}} \DataTypeTok{M}\NormalTok{ ts}
  \DataTypeTok{Nil}\OtherTok{ ::} \DataTypeTok{M}\NormalTok{ ts}

\KeywordTok{type} \DataTypeTok{System}\OtherTok{ ::}\NormalTok{ [}\OperatorTok{*}\NormalTok{] }\OtherTok{{-}\textgreater{}}\NormalTok{ [}\DataTypeTok{Q}\NormalTok{] }\OtherTok{{-}\textgreater{}} \OperatorTok{*} \CommentTok{{-}{-} A system{-}function type depends on a query.}
\KeywordTok{newtype} \DataTypeTok{System}\NormalTok{ ts q }\OtherTok{=} \DataTypeTok{System}\NormalTok{ ((}\DataTypeTok{QueryE}\NormalTok{ q, }\DataTypeTok{QueryT}\NormalTok{ q) }\OtherTok{{-}\textgreater{}} \DataTypeTok{M}\NormalTok{ ts)}

\KeywordTok{data} \DataTypeTok{Schedule}\NormalTok{ types }\KeywordTok{where} \CommentTok{{-}{-} Schedule type takes the list of component types.}
  \DataTypeTok{Conc}\OtherTok{ ::} \DataTypeTok{System}\NormalTok{ ts q }\OtherTok{{-}\textgreater{}} \DataTypeTok{Schedule}\NormalTok{ ts}
  \DataTypeTok{Seq}\OtherTok{  ::} \DataTypeTok{System}\NormalTok{ ts q }\OtherTok{{-}\textgreater{}} \DataTypeTok{Schedule}\NormalTok{ ts}
\NormalTok{  (}\OperatorTok{:}\NormalTok{∥)}\OtherTok{ ::} \DataTypeTok{Schedule}\NormalTok{ ts }\OtherTok{{-}\textgreater{}} \DataTypeTok{Schedule}\NormalTok{ ts }\OtherTok{{-}\textgreater{}} \DataTypeTok{Schedule}\NormalTok{ ts}
\NormalTok{  (}\OperatorTok{:}\NormalTok{≫)}\OtherTok{ ::} \DataTypeTok{Schedule}\NormalTok{ ts }\OtherTok{{-}\textgreater{}} \DataTypeTok{Schedule}\NormalTok{ ts }\OtherTok{{-}\textgreater{}} \DataTypeTok{Schedule}\NormalTok{ ts}

\NormalTok{(↓)}\OtherTok{ ::} \DataTypeTok{State}\NormalTok{ ts }\OtherTok{{-}\textgreater{}} \DataTypeTok{M}\NormalTok{ ts }\OtherTok{{-}\textgreater{}} \DataTypeTok{State}\NormalTok{ ts }\CommentTok{{-}{-} Apply a mutation to compute updated state.}
\OtherTok{new ::} \DataTypeTok{M}\NormalTok{ ts }\OtherTok{{-}\textgreater{}} \DataTypeTok{State}\NormalTok{ ts }\CommentTok{{-}{-} Compute an initial (or empty) state.}

\NormalTok{(⇓)}\OtherTok{ ::} \DataTypeTok{State}\NormalTok{ ts }\OtherTok{{-}\textgreater{}} \DataTypeTok{Schedule}\NormalTok{ ts }\OtherTok{{-}\textgreater{}} \DataTypeTok{M}\NormalTok{ ts }\CommentTok{{-}{-} Apply a schedule to compute a mutation.}

\NormalTok{(↓⇓)}\OtherTok{ ::} \DataTypeTok{State}\NormalTok{ ts }\OtherTok{{-}\textgreater{}} \DataTypeTok{Schedule}\NormalTok{ ts }\OtherTok{{-}\textgreater{}} \DataTypeTok{State}\NormalTok{ ts }\CommentTok{{-}{-} Alias for \textasciigrave{}(c ↓ (c ⇓ z))\textasciigrave{}.}
\OtherTok{run ::} \DataTypeTok{State}\NormalTok{ ts }\OtherTok{{-}\textgreater{}} \DataTypeTok{Schedule}\NormalTok{ ts }\OtherTok{{-}\textgreater{}} \DataTypeTok{IO}\NormalTok{ (}\DataTypeTok{State}\NormalTok{ ts) }\CommentTok{{-}{-} Like ↓⇓ but uses parallelism.}

\KeywordTok{type}\NormalTok{ (×)}\OtherTok{ ::} \OperatorTok{*} \OtherTok{{-}\textgreater{}} \OperatorTok{*} \OtherTok{{-}\textgreater{}} \OperatorTok{*} \CommentTok{{-}{-} Infix 2{-}tuple type for convenience.}
\KeywordTok{pattern}\OtherTok{ (:::) ::}\NormalTok{ a }\OtherTok{{-}\textgreater{}}\NormalTok{ b }\OtherTok{{-}\textgreater{}}\NormalTok{ a × b }\CommentTok{{-}{-} Infix 2{-}tuple constructor for convenience.}
\end{Highlighting}
\end{Shaded}

\end{codelisting}

There are a couple of symbols in Lst.~\ref{lst:em-api} that differ from
\Cref{sec:core-ecs}. The fresh mutation constructor on line 13 is
written \VERB|\NormalTok{(}\DataTypeTok{Nu}\NormalTok{ f)}| instead of
\(\mutfresh{f}\). Sequential schedule composition on line 23 is written
\VERB|\NormalTok{(z }\OperatorTok{:}\NormalTok{≫ z\textquotesingle{})}|
instead of \((z \spSeqComp z')\). Finally, we use the convenience
aliases on lines 33-34 for fixed-size vectors, writing
\VERB|\NormalTok{(a }\OperatorTok{:::}\NormalTok{ b }\OperatorTok{:::}\NormalTok{ c)}|,
instead of \(\langle a, b, c \rangle\).

There are two new symbols in Lst.~\ref{lst:em-api}, foreshadowed in
\Cref{sec:executable-model-new}, that implement schedule application. We
provide an alias \VERB|\NormalTok{(c ↓⇓ z)}| on line 30 for the
composition of \VERB|\NormalTok{(c ↓)}| after
\VERB|\NormalTok{(c ⇓ z)}|, which computes the mutation for a state and
then applies the mutation to obtain the updated state. This alias
contrasts with \VERB|\NormalTok{run}| on the next line, which
accomplishes the same task quite differently: \VERB|\NormalTok{run}|
places the state into a shared mutable reference and uses threads
(parallelism) wherever concurrency is present in the schedule. Every
mutation produced by an invocation of a system function by
\VERB|\NormalTok{run}| is immediately applied to the state under mutual
exclusion.

\subsection{Toy Physics Example}

Our first example, the toy physics simulation used throughout this
paper, is implemented in Haskell against our executable model in
Lst.~\ref{lst:em-toyphys}. The main purpose of this example is to
familiarize the reader with the syntax of our executable model of Core
ECS. After enabling some language features and importing the executable
model on lines 1-2 (both of which apply through the end of this
document), we define the component data types for one-dimensional
position and velocity on lines 4-5. The inertia and collision systems
(lines 10-11 and 13-20) are implemented largely the same way they were
formalized in \Cref{fig:example-system-functions}, except that the query
vector is a type-level list (the second parameter to the
\VERB|\DataTypeTok{System}| types on lines 10 and 14). The schedule
(lines 22-23) is the same as that of \Cref{fig:example-exec-schedule}.
Lines 25-29 define the initial state by creating and attaching
components to three entities. The example main loop (lines 31-33) prints
the state and applies the schedule a fixed number of times; it is called
on line 8.

\begin{codelisting}

\caption{The toy physics example implemented in Haskell with our
executable model of Core ECS.}\label{lst:em-toyphys}

\begin{Shaded}
\begin{Highlighting}[numbers=left,,]
\OtherTok{\{{-}\# LANGUAGE GHC2024, NegativeLiterals \#{-}\}}
\KeywordTok{import} \DataTypeTok{PartialMapECS}

\KeywordTok{newtype} \DataTypeTok{Pos} \OtherTok{=} \DataTypeTok{Pos} \DataTypeTok{Int} \KeywordTok{deriving}\NormalTok{ (}\DataTypeTok{Show}\NormalTok{, }\DataTypeTok{Eq}\NormalTok{)}
\KeywordTok{newtype} \DataTypeTok{Vel} \OtherTok{=} \DataTypeTok{Vel} \DataTypeTok{Int} \KeywordTok{deriving}\NormalTok{ (}\DataTypeTok{Show}\NormalTok{)}

\OtherTok{toyPhys ::} \DataTypeTok{IO}\NormalTok{ ()}
\NormalTok{toyPhys }\OtherTok{=}\NormalTok{ loop }\DecValTok{2}\NormalTok{ schedule start}
  \KeywordTok{where}
\OtherTok{    inertia ::} \DataTypeTok{Has}\NormalTok{ ts \textquotesingle{}[}\DataTypeTok{Pos}\NormalTok{] }\OtherTok{=\textgreater{}} \DataTypeTok{System}\NormalTok{ ts \textquotesingle{}[}\DataTypeTok{Incl} \DataTypeTok{Pos} \OperatorTok{:}\NormalTok{∧}\OperatorTok{:} \DataTypeTok{Incl} \DataTypeTok{Vel}\NormalTok{]}
\NormalTok{    inertia }\OtherTok{=} \DataTypeTok{System} \OperatorTok{$}\NormalTok{ \textbackslash{}(ej, }\DataTypeTok{Pos}\NormalTok{ pj }\OperatorTok{:::} \DataTypeTok{Vel}\NormalTok{ vj) }\OtherTok{{-}\textgreater{}} \DataTypeTok{Attach}\NormalTok{ ej (}\DataTypeTok{Pos} \OperatorTok{$}\NormalTok{ pj }\OperatorTok{+}\NormalTok{ vj)}

\OtherTok{    collide ::} \DataTypeTok{Has}\NormalTok{ ts \textquotesingle{}[}\DataTypeTok{Vel}\NormalTok{, }\DataTypeTok{Pos}\NormalTok{] }\OtherTok{=\textgreater{}}
        \DataTypeTok{System}\NormalTok{ ts \textquotesingle{}[}\DataTypeTok{Incl} \DataTypeTok{Pos} \OperatorTok{:}\NormalTok{∧}\OperatorTok{:} \DataTypeTok{Incl} \DataTypeTok{Vel}\NormalTok{, }\DataTypeTok{Incl} \DataTypeTok{Pos} \OperatorTok{:}\NormalTok{∧}\OperatorTok{:} \DataTypeTok{Excl} \DataTypeTok{Vel}\NormalTok{]}
\NormalTok{    collide }\OtherTok{=} \DataTypeTok{System} \OperatorTok{$}\NormalTok{ \textbackslash{}(ej }\OperatorTok{:::}\NormalTok{ eh, (pj}\OperatorTok{:::}\DataTypeTok{Vel}\NormalTok{ vj) }\OperatorTok{:::}\NormalTok{ (ph}\OperatorTok{:::}\NormalTok{())) }\OtherTok{{-}\textgreater{}}
        \KeywordTok{if}\NormalTok{ pj }\OperatorTok{==}\NormalTok{ ph }\KeywordTok{then}
            \DataTypeTok{Detach}\NormalTok{ ej (}\DataTypeTok{Proxy} \OperatorTok{@}\DataTypeTok{Pos}\NormalTok{) }\OperatorTok{:}\NormalTok{• }\DataTypeTok{Detach}\NormalTok{ ej (}\DataTypeTok{Proxy} \OperatorTok{@}\DataTypeTok{Vel}\NormalTok{) }\OperatorTok{:}\NormalTok{•}
            \DataTypeTok{Attach}\NormalTok{ eh (}\DataTypeTok{Vel} \OperatorTok{$} \FunctionTok{quot}\NormalTok{ vj }\DecValTok{2}\NormalTok{) }\OperatorTok{:}\NormalTok{•}
            \DataTypeTok{Nu}\NormalTok{ (\textbackslash{}el }\OtherTok{{-}\textgreater{}} \DataTypeTok{Attach}\NormalTok{ el pj }\OperatorTok{:}\NormalTok{• }\DataTypeTok{Attach}\NormalTok{ el (}\DataTypeTok{Vel} \OperatorTok{$} \FunctionTok{quot}\NormalTok{ vj }\OperatorTok{{-}}\DecValTok{2}\NormalTok{))}
        \KeywordTok{else} \DataTypeTok{Nil}

\OtherTok{    schedule ::} \DataTypeTok{Schedule}\NormalTok{ \textquotesingle{}[}\DataTypeTok{Pos}\NormalTok{, }\DataTypeTok{Vel}\NormalTok{]}
\NormalTok{    schedule }\OtherTok{=} \DataTypeTok{Conc}\NormalTok{ inertia }\OperatorTok{:}\NormalTok{≫ }\DataTypeTok{Seq}\NormalTok{ collide}

\OtherTok{    start ::} \DataTypeTok{State}\NormalTok{ \textquotesingle{}[}\DataTypeTok{Pos}\NormalTok{, }\DataTypeTok{Vel}\NormalTok{]}
\NormalTok{    start }\OtherTok{=}\NormalTok{ new}
        \OperatorTok{$}  \DataTypeTok{Nu}\NormalTok{ (\textbackslash{}e }\OtherTok{{-}\textgreater{}} \DataTypeTok{Attach}\NormalTok{ e (}\DataTypeTok{Pos} \DecValTok{1}\NormalTok{) }\OperatorTok{:}\NormalTok{• }\DataTypeTok{Attach}\NormalTok{ e (}\DataTypeTok{Vel} \DecValTok{6}\NormalTok{))}
        \OperatorTok{:}\NormalTok{• }\DataTypeTok{Nu}\NormalTok{ (\textbackslash{}e }\OtherTok{{-}\textgreater{}} \DataTypeTok{Attach}\NormalTok{ e (}\DataTypeTok{Pos} \DecValTok{7}\NormalTok{))}
        \OperatorTok{:}\NormalTok{• }\DataTypeTok{Nu}\NormalTok{ (\textbackslash{}e }\OtherTok{{-}\textgreater{}} \DataTypeTok{Attach}\NormalTok{ e (}\DataTypeTok{Pos} \DecValTok{9}\NormalTok{) }\OperatorTok{:}\NormalTok{• }\DataTypeTok{Attach}\NormalTok{ e (}\DataTypeTok{Vel} \OperatorTok{{-}}\DecValTok{2}\NormalTok{))}

\OtherTok{loop ::} \DataTypeTok{Show}\NormalTok{ (}\DataTypeTok{State}\NormalTok{ ts) }\OtherTok{=\textgreater{}} \DataTypeTok{Integer} \OtherTok{{-}\textgreater{}} \DataTypeTok{Schedule}\NormalTok{ ts }\OtherTok{{-}\textgreater{}} \DataTypeTok{State}\NormalTok{ ts }\OtherTok{{-}\textgreater{}} \DataTypeTok{IO}\NormalTok{ ()}
\NormalTok{loop n z c }\OperatorTok{|} \DecValTok{0} \OperatorTok{\textless{}}\NormalTok{ n }\OtherTok{=} \FunctionTok{print}\NormalTok{ c }\OperatorTok{\textgreater{}\textgreater{}}\NormalTok{ loop (n }\OperatorTok{{-}} \DecValTok{1}\NormalTok{) z (c ↓ (c ⇓ z))}
           \OperatorTok{|} \FunctionTok{otherwise} \OtherTok{=} \FunctionTok{putStrLn}\NormalTok{ (}\FunctionTok{show}\NormalTok{ c }\OperatorTok{++} \StringTok{" END"}\NormalTok{)}
\end{Highlighting}
\end{Shaded}

\end{codelisting}

The function \VERB|\NormalTok{toyPhys}| in Lst.~\ref{lst:em-toyphys}
produces the three lines of output in Lst.~\ref{lst:em-toyphys-out}.
This output depicts ECS program state the same way as in
\Cref{fig:example-state}, and the first line in
Lst.~\ref{lst:em-toyphys-out} matches \(c\) in \Cref{fig:example-state}.
The three lines of output together describe the three frames of
\Cref{fig:badphysics-conc-a}: \texttt{e0} collides with \texttt{e1},
destroying \texttt{e0} and sending \texttt{e1} flying to the right while
creating \texttt{e3} in the process. Meanwhile, \texttt{e2} passes by
unimpeded. One difference to note, between ECS program state in
Lst.~\ref{lst:em-toyphys-out} and that of Core ECS, is that because the
executable model implements a mechanism to generate fresh entities,
there is a piece of metadata at the end of each line that describes the
next fresh entity to use.

\begin{codelisting}

\caption{Output produced by \VERB|\NormalTok{toyPhys}| in
Lst.~\ref{lst:em-toyphys}.}\label{lst:em-toyphys-out}

\begin{Shaded}
\begin{Highlighting}[numbers=left,,]
\NormalTok{Pos↦\{e0 ↦ Pos 1, e1 ↦ Pos 7, e2 ↦ Pos 9\} :+ Vel↦\{e0 ↦ Vel 6, e2 ↦ Vel ({-}2)\} :+ Metadata \{nextFresh = e3\}}
\NormalTok{Pos↦\{e1 ↦ Pos 7, e2 ↦ Pos 7, e3 ↦ Pos 7\} :+ Vel↦\{e1 ↦ Vel 3, e2 ↦ Vel ({-}2), e3 ↦ Vel ({-}3)\} :+ Metadata \{nextFresh = e4\}}
\NormalTok{Pos↦\{e1 ↦ Pos 10, e2 ↦ Pos 5, e3 ↦ Pos 4\} :+ Vel↦\{e1 ↦ Vel 3, e2 ↦ Vel ({-}2), e3 ↦ Vel ({-}3)\} :+ Metadata \{nextFresh = e4\} END}
\end{Highlighting}
\end{Shaded}

\end{codelisting}

Now, we develop the Bevy ECS implementation of the toy physics
simulation in Lst.~\ref{lst:rs-toyphys}. There are several differences
from \VERB|\NormalTok{toyPhys}| in Lst.~\ref{lst:em-toyphys}:

\begin{codelisting}

\caption{The toy physics example implemented in Rust with Bevy
ECS.}\label{lst:rs-toyphys}

\begin{Shaded}
\begin{Highlighting}[numbers=left,,]
\KeywordTok{use} \PreprocessorTok{bevy\_ecs::prelude::}\OperatorTok{*;}

\KeywordTok{mod}\NormalTok{ toy\_phys }\OperatorTok{\{}
  \KeywordTok{use} \KeywordTok{super}\PreprocessorTok{::}\OperatorTok{*;} \KeywordTok{use} \PreprocessorTok{std::collections::}\NormalTok{BTreeSet}\OperatorTok{;}

  \AttributeTok{\#[}\NormalTok{derive}\AttributeTok{(}\BuiltInTok{Debug}\OperatorTok{,} \BuiltInTok{Clone}\OperatorTok{,} \BuiltInTok{PartialEq}\OperatorTok{,}\NormalTok{ Component}\AttributeTok{)]}
  \KeywordTok{struct}\NormalTok{ Pos(}\DataTypeTok{i32}\NormalTok{)}\OperatorTok{;}
  \AttributeTok{\#[}\NormalTok{derive}\AttributeTok{(}\BuiltInTok{Debug}\OperatorTok{,}\NormalTok{ Component}\AttributeTok{)]}
  \KeywordTok{struct}\NormalTok{ Vel(}\DataTypeTok{i32}\NormalTok{)}\OperatorTok{;}

  \KeywordTok{fn}\NormalTok{ inertia(}\KeywordTok{mut}\NormalTok{ q}\OperatorTok{:}\NormalTok{ Query}\OperatorTok{\textless{}}\NormalTok{(}\OperatorTok{\&}\KeywordTok{mut}\NormalTok{ Pos}\OperatorTok{,} \OperatorTok{\&}\NormalTok{Vel)}\OperatorTok{\textgreater{}}\NormalTok{) }\OperatorTok{\{}
\NormalTok{    q}\OperatorTok{.}\NormalTok{par\_iter\_mut()}\OperatorTok{.}\NormalTok{for\_each(}\OperatorTok{|}\NormalTok{(}\KeywordTok{mut}\NormalTok{ p}\OperatorTok{,}\NormalTok{ Vel(v))}\OperatorTok{|} \OperatorTok{\{}\NormalTok{ p}\OperatorTok{.}\DecValTok{0} \OperatorTok{+=}\NormalTok{ v}\OperatorTok{;} \OperatorTok{\}}\NormalTok{)}
  \OperatorTok{\}}

  \KeywordTok{fn}\NormalTok{ collide(}\KeywordTok{mut}\NormalTok{ command}\OperatorTok{:}\NormalTok{ Commands}\OperatorTok{,}\NormalTok{ moving}\OperatorTok{:}\NormalTok{ Query}\OperatorTok{\textless{}}\NormalTok{(Entity}\OperatorTok{,} \OperatorTok{\&}\NormalTok{Pos}\OperatorTok{,} \OperatorTok{\&}\NormalTok{Vel)}\OperatorTok{\textgreater{},}
\NormalTok{                                stationary}\OperatorTok{:}\NormalTok{ Query}\OperatorTok{\textless{}}\NormalTok{(Entity}\OperatorTok{,} \OperatorTok{\&}\NormalTok{Pos)}\OperatorTok{,}\NormalTok{ Without}\OperatorTok{\textless{}}\NormalTok{Vel}\OperatorTok{\textgreater{}\textgreater{}}\NormalTok{) }\OperatorTok{\{}
    \KeywordTok{let} \KeywordTok{mut}\NormalTok{ used }\OperatorTok{=} \PreprocessorTok{BTreeSet::}\NormalTok{new()}\OperatorTok{;}
    \ControlFlowTok{for}\NormalTok{ (ej}\OperatorTok{,}\NormalTok{ pj}\OperatorTok{,}\NormalTok{ vj) }\KeywordTok{in} \OperatorTok{\&}\NormalTok{moving }\OperatorTok{\{}
      \ControlFlowTok{for}\NormalTok{ (eh}\OperatorTok{,}\NormalTok{ ph) }\KeywordTok{in} \OperatorTok{\&}\NormalTok{stationary }\OperatorTok{\{}
        \ControlFlowTok{if}\NormalTok{ used}\OperatorTok{.}\NormalTok{contains(}\OperatorTok{\&}\NormalTok{ej) }\OperatorTok{||}\NormalTok{ used}\OperatorTok{.}\NormalTok{contains(}\OperatorTok{\&}\NormalTok{eh) }\OperatorTok{\{} \ControlFlowTok{continue}\OperatorTok{;} \OperatorTok{\}}
        \ControlFlowTok{if}\NormalTok{ pj }\OperatorTok{==}\NormalTok{ ph }\OperatorTok{\{}
\NormalTok{          command}\OperatorTok{.}\NormalTok{entity(ej)}\OperatorTok{.}\NormalTok{despawn()}\OperatorTok{;}
\NormalTok{          used}\OperatorTok{.}\NormalTok{insert(ej)}\OperatorTok{;} \CommentTok{// was deleted}
\NormalTok{          command}\OperatorTok{.}\NormalTok{entity(eh)}\OperatorTok{.}\NormalTok{insert(Vel(vj}\OperatorTok{.}\DecValTok{0} \OperatorTok{/} \DecValTok{2}\NormalTok{))}\OperatorTok{;}
\NormalTok{          used}\OperatorTok{.}\NormalTok{insert(eh)}\OperatorTok{;} \CommentTok{// was given velocity}
\NormalTok{          command}\OperatorTok{.}\NormalTok{spawn((pj}\OperatorTok{.}\NormalTok{clone()}\OperatorTok{,}\NormalTok{ Vel(}\OperatorTok{{-}}\NormalTok{vj}\OperatorTok{.}\DecValTok{0} \OperatorTok{/} \DecValTok{2}\NormalTok{)))}\OperatorTok{;}
  \OperatorTok{\}} \OperatorTok{\}} \OperatorTok{\}} \OperatorTok{\}}

  \KeywordTok{fn}\NormalTok{ print\_all(ps}\OperatorTok{:}\NormalTok{ Query}\OperatorTok{\textless{}}\NormalTok{(Entity}\OperatorTok{,} \OperatorTok{\&}\NormalTok{Pos)}\OperatorTok{\textgreater{},}\NormalTok{ vs}\OperatorTok{:}\NormalTok{ Query}\OperatorTok{\textless{}}\NormalTok{(Entity}\OperatorTok{,} \OperatorTok{\&}\NormalTok{Vel)}\OperatorTok{\textgreater{}}\NormalTok{) }\OperatorTok{\{}
    \PreprocessorTok{print!}\NormalTok{(   }\StringTok{"Pos↦["}\NormalTok{)}\OperatorTok{;} \ControlFlowTok{for}\NormalTok{ (e}\OperatorTok{,}\NormalTok{ p) }\KeywordTok{in} \OperatorTok{\&}\NormalTok{ps }\OperatorTok{\{} \PreprocessorTok{print!}\NormalTok{(}\StringTok{"e\{\}↦\{:?\}, "}\OperatorTok{,}\NormalTok{ e}\OperatorTok{.}\NormalTok{index()}\OperatorTok{,}\NormalTok{ p) }\OperatorTok{\}}
    \PreprocessorTok{print!}\NormalTok{(}\StringTok{"], Vel↦["}\NormalTok{)}\OperatorTok{;} \ControlFlowTok{for}\NormalTok{ (e}\OperatorTok{,}\NormalTok{ v) }\KeywordTok{in} \OperatorTok{\&}\NormalTok{vs }\OperatorTok{\{} \PreprocessorTok{print!}\NormalTok{(}\StringTok{"e\{\}↦\{:?\}, "}\OperatorTok{,}\NormalTok{ e}\OperatorTok{.}\NormalTok{index()}\OperatorTok{,}\NormalTok{ v) }\OperatorTok{\}}
    \PreprocessorTok{print!}\NormalTok{(}\StringTok{"]}\SpecialCharTok{\textbackslash{}n}\StringTok{"}\NormalTok{)}\OperatorTok{;}
  \OperatorTok{\}}

  \KeywordTok{pub} \KeywordTok{fn}\NormalTok{ demo() }\OperatorTok{\{}
    \KeywordTok{let} \KeywordTok{mut}\NormalTok{ world }\OperatorTok{=} \PreprocessorTok{World::}\KeywordTok{default}\NormalTok{()}\OperatorTok{;}
\NormalTok{    world}\OperatorTok{.}\NormalTok{spawn((Pos(}\DecValTok{1}\NormalTok{)}\OperatorTok{,}\NormalTok{ Vel(}\DecValTok{6}\NormalTok{)))}\OperatorTok{;}
\NormalTok{    world}\OperatorTok{.}\NormalTok{spawn(Pos(}\DecValTok{7}\NormalTok{))}\OperatorTok{;}
\NormalTok{    world}\OperatorTok{.}\NormalTok{spawn((Pos(}\DecValTok{9}\NormalTok{)}\OperatorTok{,}\NormalTok{ Vel(}\OperatorTok{{-}}\DecValTok{2}\NormalTok{)))}\OperatorTok{;}
    \KeywordTok{let} \KeywordTok{mut}\NormalTok{ schedule }\OperatorTok{=} \PreprocessorTok{Schedule::}\KeywordTok{default}\NormalTok{()}\OperatorTok{;}
\NormalTok{    schedule}\OperatorTok{.}\NormalTok{add\_systems((print\_all}\OperatorTok{,}\NormalTok{ inertia}\OperatorTok{,}\NormalTok{ collide)}\OperatorTok{.}\NormalTok{chain())}\OperatorTok{;}
    \ControlFlowTok{for}\NormalTok{ \_ }\KeywordTok{in} \DecValTok{0}\OperatorTok{..}\DecValTok{3} \OperatorTok{\{}\NormalTok{ schedule}\OperatorTok{.}\NormalTok{run(}\OperatorTok{\&}\KeywordTok{mut}\NormalTok{ world)}\OperatorTok{;} \OperatorTok{\}}
  \OperatorTok{\}}
\OperatorTok{\}} \CommentTok{// end mod toy\_phys}
\end{Highlighting}
\end{Shaded}

\end{codelisting}

In Bevy ECS, each system must iterate manually over its query results.
This has a few consequences: First, in Bevy ECS a system (not the
schedule) specifies whether it uses intra-system concurrency by choosing
to use a parallel iteration interface. This can be seen on line 11 in
Lst.~\ref{lst:rs-toyphys} where the \VERB|\NormalTok{inertia}| system
iterates over its query result using \VERB|\NormalTok{par\_iter\_mut}|.
Second, in Bevy ECS a system with multiple queries may be written to be
more efficient than a \(O(n^x)\) nesting of loops; it may iterate first
over one query, and accumulate a local variable, to filter when
iterating over subsequent queries. We do not leverage this in the
\VERB|\NormalTok{collide}| system; instead we use two nested loops on
lines 18-19, \(O(n^2)\), to match the behavior of the Core ECS
executable model.

In Bevy ECS, a schedule can express inter-system concurrency, but not
intra-system concurrency. In the schedule on line 42 of
Lst.~\ref{lst:rs-toyphys}, the systems are run one-after-the-other, as
though with sequential composition of schedules in Core ECS. If the bare
tuple of systems were passed in without the call to
\VERB|\NormalTok{chain}|, then Bevy ECS would assess and potentially run
the systems concurrently \citep{bevy2024systemcompat}. However, leaving
systems ``ambiguously ordered'' in this way is not a guarantee that Bevy
ECS will consider them eligible to run concurrently.

In Bevy ECS, \emph{commands} are required to perform changes such as
creating (or deleting) entities and attaching (or detaching) components
within a system function. However these commands do not take effect
until after the system stops running (they are buffered). Whereas in the
\VERB|\NormalTok{inertia}| system in Lst.~\ref{lst:rs-toyphys} on line
11 we only perform an in-place mutation of entities' positions (an Owned
update according to \Cref{fig:concurrency-slice-and-dice}), the
\VERB|\NormalTok{collide}| system uses commands to delete entity
\texttt{ej} (line 22; a Deferred delete) and to attach a velocity
component to entity \texttt{eh} (line 24; a Deferred insert). Since
these commands do not take effect while the \VERB|\NormalTok{collide}|
system is running, it is necessary to create and update a local set of
entities that are invalid for further collisions (lines 17, 23, and 25),
and to explicitly filter such entities during manual iteration (line
20). By contrast, in Core ECS a system run with \(\spSeqOne{-}\) does
not iterate manually and so the writes performed by one call are visible
in subsequent calls, making the use of a tombstone-set unnecessary.

Running \VERB|\PreprocessorTok{toy\_phys::}\NormalTok{demo}| in
Lst.~\ref{lst:rs-toyphys} produces the three lines of output in
Lst.~\ref{lst:rs-toyphys-out}. These differ from the output produced by
our executable model of Core ECS in minor ways. Bevy ECS does not
iterate over query results sorted by entity index. This means that we
might have seen the output differ from our executable model in a
significant way (entity \texttt{e2} could have collided with entity
\texttt{e1} instead of \texttt{e0}) and it is only by chance that the
same collision took place in both demos.

\begin{codelisting}

\caption{Output produced by
\VERB|\PreprocessorTok{toy\_phys::}\NormalTok{demo}| in
Lst.~\ref{lst:rs-toyphys}.}\label{lst:rs-toyphys-out}

\begin{Shaded}
\begin{Highlighting}[numbers=left,,]
\NormalTok{Pos↦[e0↦Pos(1), e2↦Pos(9), e1↦Pos(7), ], Vel↦[e0↦Vel(6), e2↦Vel({-}2), ]}
\NormalTok{Pos↦[e2↦Pos(7), e1↦Pos(7), e3↦Pos(7), ], Vel↦[e2↦Vel({-}2), e1↦Vel(3), e3↦Vel({-}3), ]}
\NormalTok{Pos↦[e2↦Pos(5), e1↦Pos(10), e3↦Pos(4), ], Vel↦[e2↦Vel({-}2), e1↦Vel(3), e3↦Vel({-}3), ]}
\end{Highlighting}
\end{Shaded}

\end{codelisting}

While the toy physics simulation is expressible in both Core ECS and
Bevy ECS, it is already somewhat clear that Core ECS has an expressivity
benefit over traditional ECS frameworks, with the latter represented
here by Bevy ECS. The \emph{essence} of an ECS program (querying the
entity-component association and attaching or detaching components to
entities) is more clear and concise with the Core ECS implementation.
Let us now look directly at whether Core ECS can express programs that
Bevy ECS cannot.

\subsection{Disjoint Entities Example}

Our next example explores the situation where two systems that run
concurrently update the same component on disjoint subsets of the live
entities. While this is safe under the second case of
\Cref{thm:infl-safe-schedule} (and so deterministic despite the presence
of scheduler non-determinism), its safety is difficult to check
statically. It can be expressed easily with the executable model of Core
ECS, but it is less straightforward to do so with Bevy ECS.

In this example each entity has a numeric component attached, and while
one system increases those numbers that are below a threshold,
simultaneously another system decreases the numbers that are not below
the threshold. Since both systems use the same threshold, the component
values that they write to are disjoint, despite being the same
component. For example, if we start with two entities having \(3\) and
\(8\) (and a threshold of \(4\)), then after one application of the
schedule we will have entities with \(4\) and \(7\) (since \(3 < 4\) the
\(3\) was incremented, and since \(8 \not< 4\) the \(8\) was
decremented). On the next iteration both will be decremented, so we have
entities with \(3\) and \(6\).

First we show the executable Core ECS model of the disjoint entities
example in Lst.~\ref{lst:em-disjointentities}. It uses the same system
twice (lines 4-7), passing in the threshold value and whether to be the
incrementing or the decrementing system on lines 9-10; these
configuration arguments are held constant while the ECS program is
running. The systems are run concurrently with themselves, and with each
other, by the schedule on lines 12-13. The output produced by this demo,
shown in Lst.~\ref{lst:em-disjointentities-out}, is exactly as it was
predicted to be in the paragraph above.

\begin{codelisting}

\caption{The disjoint entities example implemented against our
executable model of Core ECS.}\label{lst:em-disjointentities}

\begin{Shaded}
\begin{Highlighting}[numbers=left,,]
\OtherTok{disjointEntities ::} \DataTypeTok{IO}\NormalTok{ ()}
\NormalTok{disjointEntities }\OtherTok{=}\NormalTok{ loop }\DecValTok{2}\NormalTok{ schedule start}
  \KeywordTok{where}
\OtherTok{    adjust ::} \DataTypeTok{Has}\NormalTok{ ts \textquotesingle{}[}\DataTypeTok{Int}\NormalTok{] }\OtherTok{=\textgreater{}} \DataTypeTok{Int} \OtherTok{{-}\textgreater{}} \DataTypeTok{Bool} \OtherTok{{-}\textgreater{}} \DataTypeTok{System}\NormalTok{ ts \textquotesingle{}[}\DataTypeTok{Incl} \DataTypeTok{Int}\NormalTok{]}
\NormalTok{    adjust bound bncrement}
      \OperatorTok{|}\NormalTok{ bncrement }\OtherTok{=} \DataTypeTok{System} \OperatorTok{$}\NormalTok{ \textbackslash{}(e, n) }\OtherTok{{-}\textgreater{}} \KeywordTok{if}\NormalTok{ n }\OperatorTok{\textless{}}\NormalTok{ bound }\KeywordTok{then} \DataTypeTok{Attach}\NormalTok{ e (n }\OperatorTok{+} \DecValTok{1}\NormalTok{) }\KeywordTok{else} \DataTypeTok{Nil}
      \OperatorTok{|} \FunctionTok{otherwise} \OtherTok{=} \DataTypeTok{System} \OperatorTok{$}\NormalTok{ \textbackslash{}(e, n) }\OtherTok{{-}\textgreater{}} \KeywordTok{if}\NormalTok{ n }\OperatorTok{\textless{}}\NormalTok{ bound }\KeywordTok{then} \DataTypeTok{Nil} \KeywordTok{else} \DataTypeTok{Attach}\NormalTok{ e (n }\OperatorTok{{-}} \DecValTok{1}\NormalTok{)}

\NormalTok{    increment }\OtherTok{=}\NormalTok{ adjust }\DecValTok{4} \DataTypeTok{True}
\NormalTok{    decrement }\OtherTok{=}\NormalTok{ adjust }\DecValTok{4} \DataTypeTok{False}

\OtherTok{    schedule ::} \DataTypeTok{Schedule}\NormalTok{ \textquotesingle{}[}\DataTypeTok{Int}\NormalTok{]}
\NormalTok{    schedule }\OtherTok{=} \DataTypeTok{Conc}\NormalTok{ increment }\OperatorTok{:}\NormalTok{∥ }\DataTypeTok{Conc}\NormalTok{ decrement}

\OtherTok{    start ::} \DataTypeTok{State}\NormalTok{ \textquotesingle{}[}\DataTypeTok{Int}\NormalTok{]}
\NormalTok{    start }\OtherTok{=}\NormalTok{ new}
        \OperatorTok{$}  \DataTypeTok{Nu}\NormalTok{ (\textbackslash{}e }\OtherTok{{-}\textgreater{}} \DataTypeTok{Attach}\NormalTok{ e (}\DecValTok{3}\OtherTok{ ::} \DataTypeTok{Int}\NormalTok{))}
        \OperatorTok{:}\NormalTok{• }\DataTypeTok{Nu}\NormalTok{ (\textbackslash{}e }\OtherTok{{-}\textgreater{}} \DataTypeTok{Attach}\NormalTok{ e (}\DecValTok{8}\OtherTok{ ::} \DataTypeTok{Int}\NormalTok{))}
\end{Highlighting}
\end{Shaded}

\end{codelisting}

\begin{codelisting}

\caption{Output produced by \VERB|\NormalTok{disjointEntities}| in
Lst.~\ref{lst:em-disjointentities}.}\label{lst:em-disjointentities-out}

\begin{Shaded}
\begin{Highlighting}[numbers=left,,]
\NormalTok{Int↦\{e0 ↦ 3, e1 ↦ 8\} :+ Metadata \{nextFresh = e2\}}
\NormalTok{Int↦\{e0 ↦ 4, e1 ↦ 7\} :+ Metadata \{nextFresh = e2\}}
\NormalTok{Int↦\{e0 ↦ 3, e1 ↦ 6\} :+ Metadata \{nextFresh = e2\} END}
\end{Highlighting}
\end{Shaded}

\end{codelisting}

Next we develop the Bevy ECS implementation of the disjoint entities
example in Lst.~\ref{lst:rs-disjointentities}. We use separate systems
to implement the increment and decrement behavior (lines 9-11 and 12-14;
both Owned updates per \Cref{fig:concurrency-slice-and-dice}) because it
is difficult to reuse one system with different configurations in Bevy
ECS.\footnote{We did not pursue running the same system concurrently with different configurations because that is not our focus here.}
The schedule we specified on line 26 first prints the world state and
then runs the increment and decrement systems without an ordering
requirement --- implying that we desire them to be run concurrently.
However, Bevy ECS will run the increment and decrement systems
sequentially anyway: Both mutably borrow the \VERB|\NormalTok{Num}|
component, so they conflict.

\begin{codelisting}

\caption{A first implementation of the disjoint entities example with
Bevy ECS.}\label{lst:rs-disjointentities}

\begin{Shaded}
\begin{Highlighting}[numbers=left,,]
\AttributeTok{\#[}\NormalTok{derive}\AttributeTok{(}\BuiltInTok{Debug}\OperatorTok{,}\NormalTok{ Component}\AttributeTok{)]}
\KeywordTok{struct}\NormalTok{ Num(}\DataTypeTok{i32}\NormalTok{)}\OperatorTok{;}

\KeywordTok{const}\NormalTok{ THRESHOLD}\OperatorTok{:} \DataTypeTok{i32} \OperatorTok{=} \DecValTok{4}\OperatorTok{;}

\KeywordTok{mod}\NormalTok{ disjoint\_entities }\OperatorTok{\{}
  \KeywordTok{use} \KeywordTok{super}\PreprocessorTok{::}\OperatorTok{*;} \KeywordTok{use} \PreprocessorTok{bevy\_ecs::schedule::}\OperatorTok{*;}

  \KeywordTok{fn}\NormalTok{ increment(}\KeywordTok{mut}\NormalTok{ q}\OperatorTok{:}\NormalTok{ Query}\OperatorTok{\textless{}\&}\KeywordTok{mut}\NormalTok{ Num}\OperatorTok{\textgreater{}}\NormalTok{) }\OperatorTok{\{}
\NormalTok{    q}\OperatorTok{.}\NormalTok{par\_iter\_mut()}\OperatorTok{.}\NormalTok{for\_each(}\OperatorTok{|}\KeywordTok{mut}\NormalTok{ n}\OperatorTok{|} \ControlFlowTok{if}\NormalTok{ n}\OperatorTok{.}\DecValTok{0} \OperatorTok{\textless{}}\NormalTok{ THRESHOLD }\OperatorTok{\{}\NormalTok{ n}\OperatorTok{.}\DecValTok{0} \OperatorTok{+=} \DecValTok{1} \OperatorTok{\}} \ControlFlowTok{else} \OperatorTok{\{\}}\NormalTok{)}\OperatorTok{;}
  \OperatorTok{\}}
  \KeywordTok{fn}\NormalTok{ decrement(}\KeywordTok{mut}\NormalTok{ q}\OperatorTok{:}\NormalTok{ Query}\OperatorTok{\textless{}\&}\KeywordTok{mut}\NormalTok{ Num}\OperatorTok{\textgreater{}}\NormalTok{) }\OperatorTok{\{}
\NormalTok{    q}\OperatorTok{.}\NormalTok{par\_iter\_mut()}\OperatorTok{.}\NormalTok{for\_each(}\OperatorTok{|}\KeywordTok{mut}\NormalTok{ n}\OperatorTok{|} \ControlFlowTok{if}\NormalTok{ n}\OperatorTok{.}\DecValTok{0} \OperatorTok{\textless{}}\NormalTok{ THRESHOLD }\OperatorTok{\{\}} \ControlFlowTok{else} \OperatorTok{\{}\NormalTok{ n}\OperatorTok{.}\DecValTok{0} \OperatorTok{{-}=} \DecValTok{1} \OperatorTok{\}}\NormalTok{)}\OperatorTok{;}
  \OperatorTok{\}}

  \KeywordTok{fn}\NormalTok{ print\_all(q}\OperatorTok{:}\NormalTok{ Query}\OperatorTok{\textless{}}\NormalTok{(Entity}\OperatorTok{,} \OperatorTok{\&}\NormalTok{Num)}\OperatorTok{\textgreater{}}\NormalTok{) }\OperatorTok{\{}
    \PreprocessorTok{print!}\NormalTok{(}\StringTok{"Num↦["}\NormalTok{)}\OperatorTok{;} \ControlFlowTok{for}\NormalTok{ (e}\OperatorTok{,}\NormalTok{ n) }\KeywordTok{in} \OperatorTok{\&}\NormalTok{q }\OperatorTok{\{} \PreprocessorTok{print!}\NormalTok{(}\StringTok{"e\{\}↦\{:?\}, "}\OperatorTok{,}\NormalTok{ e}\OperatorTok{.}\NormalTok{index()}\OperatorTok{,}\NormalTok{ n) }\OperatorTok{\}}
    \PreprocessorTok{print!}\NormalTok{(}\StringTok{"]}\SpecialCharTok{\textbackslash{}n}\StringTok{"}\NormalTok{)}\OperatorTok{;}
  \OperatorTok{\}}

  \KeywordTok{pub} \KeywordTok{fn}\NormalTok{ demo() }\OperatorTok{\{}
    \KeywordTok{let} \KeywordTok{mut}\NormalTok{ world }\OperatorTok{=} \PreprocessorTok{World::}\KeywordTok{default}\NormalTok{()}\OperatorTok{;}
\NormalTok{    world}\OperatorTok{.}\NormalTok{spawn(Num(}\DecValTok{3}\NormalTok{))}\OperatorTok{;}
\NormalTok{    world}\OperatorTok{.}\NormalTok{spawn(Num(}\DecValTok{8}\NormalTok{))}\OperatorTok{;}
    \KeywordTok{let} \KeywordTok{mut}\NormalTok{ schedule }\OperatorTok{=} \PreprocessorTok{Schedule::}\KeywordTok{default}\NormalTok{()}\OperatorTok{;}
\NormalTok{    schedule}\OperatorTok{.}\NormalTok{add\_systems((print\_all}\OperatorTok{,}\NormalTok{ (increment}\OperatorTok{,}\NormalTok{ decrement))}\OperatorTok{.}\NormalTok{chain())}\OperatorTok{;}
    \ControlFlowTok{for}\NormalTok{ \_ }\KeywordTok{in} \DecValTok{0}\OperatorTok{..}\DecValTok{3} \OperatorTok{\{}\NormalTok{ schedule}\OperatorTok{.}\NormalTok{run(}\OperatorTok{\&}\KeywordTok{mut}\NormalTok{ world)}\OperatorTok{;} \OperatorTok{\}}
  \OperatorTok{\}}
\OperatorTok{\}} \CommentTok{// end mod disjoint\_entities}
\end{Highlighting}
\end{Shaded}

\end{codelisting}

When \VERB|\PreprocessorTok{disjoint\_entities::}\NormalTok{demo}| in
Lst.~\ref{lst:rs-disjointentities} runs, it arbitrarily chooses an order
for the conflicting systems that are ambiguously ordered in the
schedule, and runs them sequentially. In this case,
\VERB|\NormalTok{decrement}| runs entirely before
\VERB|\NormalTok{increment}|. This ordering manifests in transition from
the middle to the final state in the output shown by
Lst.~\ref{lst:rs-disjointentities-out}.\footnote{
    To make it clearer, add start and end print-lines to both systems and observe that they do not interlace.
} In the final state, \texttt{e0} has \(4\) instead of \(3\) because,
during schedule application, \(4\) is decremented to \(3\) \emph{and
then incremented again}, before the final state is printed. We cannot
force Bevy ECS to run the conflicting systems concurrently.
Additionally, to our knowledge we cannot force Bevy ECS to tell us ahead
of time that the ambiguously-ordered conflicting systems \emph{cannot}
be run concurrently.

\begin{codelisting}

\caption{Output produced by
\VERB|\PreprocessorTok{disjoint\_entities::}\NormalTok{demo}| in
Lst.~\ref{lst:rs-disjointentities}.}\label{lst:rs-disjointentities-out}

\begin{Shaded}
\begin{Highlighting}[numbers=left,,]
\NormalTok{Num↦[e0↦Num(3), e1↦Num(8), ]}
\NormalTok{Num↦[e0↦Num(4), e1↦Num(7), ]}
\NormalTok{Num↦[e0↦Num(4), e1↦Num(6), ]}
\end{Highlighting}
\end{Shaded}

\end{codelisting}

In Bevy ECS there are at least three ways to access component data. One
might assume that we could implement this example as specified using
some alternate means.

\begin{enumerate}
\def\labelenumi{\arabic{enumi}.}
\tightlist
\item
  \VERB|\PreprocessorTok{bevy\_ecs::system::}\NormalTok{SystemParam}| is
  the trait for arguments to system functions, subject to the write
  conflict rules we have run up against, above.
\item
  \VERB|\PreprocessorTok{bevy\_ecs::world::World::}\NormalTok{run\_system}|
  is a function for running a system with exclusive world access, but no
  other parallel threads are allowed access while it runs.
\item
  \VERB|\PreprocessorTok{bevy\_ecs::system::}\NormalTok{SystemState}| is
  a mechanism for non-systems to use \VERB|\NormalTok{SystemParam}| data
  to access component data as a system does, but its use assumes
  exclusive world access.
\end{enumerate}

We have confirmed with the authors of Bevy ECS that, unless the
programmer can ``prove to Bevy'' that the entities requested by their
queries are disjoint, there is no way for two systems to run
concurrently if both mutate components of the same type. For
completeness, we show an implementation in
Lst.~\ref{lst:rs-disjointentities2} that does this, however, we
emphasize that this is not the same ECS program. It first classifies
each entity according to which side of the threshold it is on (the
\VERB|\NormalTok{classify}| system on lines 7-15; via Deferred inserts
and Deferred deletes by \Cref{fig:concurrency-slice-and-dice}). Then
simplified \VERB|\NormalTok{increment}| and \VERB|\NormalTok{decrement}|
systems (lines 17-19 and 20-22; still via Owned updates) use this
classification in their queries, allowing Bevy ECS to run them
concurrently. The use of
\VERB|\NormalTok{With}\OperatorTok{\textless{}}\NormalTok{Below}\OperatorTok{\textgreater{}}|
and
\VERB|\NormalTok{Without}\OperatorTok{\textless{}}\NormalTok{Below}\OperatorTok{\textgreater{}}|
on lines 17 and 20 ensure that the entity matches of the two systems are
disjoint. Unfortunately, we were unable to locate a public interface in
Bevy ECS to ask ahead of time whether systems will run
concurrently.\footnote{
    ScheduleGraph and Access structures exist in Bevy and correctly report that these systems access conflicting components, however Bevy uses a FilteredAccess structure to make the final determination of whether systems can run concurrently.
    We were unable to locate an interface by which a populated FilteredAccess structure could be queried.
    In any case, as before, it can be made clear by adding start and end print-lines to the systems and observing that they interlace.
}

\begin{codelisting}

\caption{The fixed implementation of the disjoint entities example with
Bevy ECS.}\label{lst:rs-disjointentities2}

\begin{Shaded}
\begin{Highlighting}[numbers=left,,]
\AttributeTok{\#[}\NormalTok{derive}\AttributeTok{(}\BuiltInTok{Debug}\OperatorTok{,}\NormalTok{ Component}\AttributeTok{)]}
\KeywordTok{struct}\NormalTok{ Below}\OperatorTok{;}

\KeywordTok{mod}\NormalTok{ disjoint\_entities\_fixed }\OperatorTok{\{}
  \KeywordTok{use} \KeywordTok{super}\PreprocessorTok{::}\OperatorTok{*;}

  \KeywordTok{fn}\NormalTok{ classify(pc}\OperatorTok{:}\NormalTok{ ParallelCommands}\OperatorTok{,}\NormalTok{ q}\OperatorTok{:}\NormalTok{ Query}\OperatorTok{\textless{}}\NormalTok{(Entity}\OperatorTok{,} \OperatorTok{\&}\NormalTok{Num}\OperatorTok{,} \DataTypeTok{Option}\OperatorTok{\textless{}\&}\NormalTok{Below}\OperatorTok{\textgreater{}}\NormalTok{)}\OperatorTok{\textgreater{}}\NormalTok{) }\OperatorTok{\{}
\NormalTok{    q}\OperatorTok{.}\NormalTok{par\_iter()}\OperatorTok{.}\NormalTok{for\_each(}\OperatorTok{|}\NormalTok{(e}\OperatorTok{,}\NormalTok{ n}\OperatorTok{,}\NormalTok{ b)}\OperatorTok{|} \ControlFlowTok{match}\NormalTok{ b }\OperatorTok{\{}
      \ConstantTok{None} \ControlFlowTok{if}\NormalTok{ n}\OperatorTok{.}\DecValTok{0} \OperatorTok{\textless{}}\NormalTok{ THRESHOLD }\OperatorTok{=\textgreater{}} \CommentTok{// not flagged "Below" and currently below}
\NormalTok{        pc}\OperatorTok{.}\NormalTok{command\_scope(}\OperatorTok{|}\KeywordTok{mut}\NormalTok{ c}\OperatorTok{|} \OperatorTok{\{}\NormalTok{ c}\OperatorTok{.}\NormalTok{entity(e)}\OperatorTok{.}\NormalTok{insert(Below)}\OperatorTok{;} \OperatorTok{\}}\NormalTok{)}\OperatorTok{,} \CommentTok{// flag}
      \ConstantTok{Some}\NormalTok{(\_) }\ControlFlowTok{if} \OperatorTok{!}\NormalTok{(n}\OperatorTok{.}\DecValTok{0} \OperatorTok{\textless{}}\NormalTok{ THRESHOLD) }\OperatorTok{=\textgreater{}} \CommentTok{// flagged "Below" and currently not below}
\NormalTok{        pc}\OperatorTok{.}\NormalTok{command\_scope(}\OperatorTok{|}\KeywordTok{mut}\NormalTok{ c}\OperatorTok{|} \OperatorTok{\{}\NormalTok{ c}\OperatorTok{.}\NormalTok{entity(e)}\OperatorTok{.}\PreprocessorTok{remove::}\OperatorTok{\textless{}}\NormalTok{Below}\OperatorTok{\textgreater{}}\NormalTok{()}\OperatorTok{;} \OperatorTok{\}}\NormalTok{)}\OperatorTok{,} \CommentTok{// unflag}
\NormalTok{      \_ }\OperatorTok{=\textgreater{}} \OperatorTok{\{\},} \CommentTok{// leave already correct flag}
    \OperatorTok{\}}\NormalTok{)}
  \OperatorTok{\}}

  \KeywordTok{fn}\NormalTok{ increment(}\KeywordTok{mut}\NormalTok{ q}\OperatorTok{:}\NormalTok{ Query}\OperatorTok{\textless{}\&}\KeywordTok{mut}\NormalTok{ Num}\OperatorTok{,}\NormalTok{ With}\OperatorTok{\textless{}}\NormalTok{Below}\OperatorTok{\textgreater{}\textgreater{}}\NormalTok{) }\OperatorTok{\{}
\NormalTok{    q}\OperatorTok{.}\NormalTok{par\_iter\_mut()}\OperatorTok{.}\NormalTok{for\_each(}\OperatorTok{|}\KeywordTok{mut}\NormalTok{ n}\OperatorTok{|}\NormalTok{ n}\OperatorTok{.}\DecValTok{0} \OperatorTok{+=} \DecValTok{1}\NormalTok{ )}
  \OperatorTok{\}}
  \KeywordTok{fn}\NormalTok{ decrement(}\KeywordTok{mut}\NormalTok{ q}\OperatorTok{:}\NormalTok{ Query}\OperatorTok{\textless{}\&}\KeywordTok{mut}\NormalTok{ Num}\OperatorTok{,}\NormalTok{ Without}\OperatorTok{\textless{}}\NormalTok{Below}\OperatorTok{\textgreater{}\textgreater{}}\NormalTok{) }\OperatorTok{\{}
\NormalTok{    q}\OperatorTok{.}\NormalTok{par\_iter\_mut()}\OperatorTok{.}\NormalTok{for\_each(}\OperatorTok{|}\KeywordTok{mut}\NormalTok{ n}\OperatorTok{|}\NormalTok{ n}\OperatorTok{.}\DecValTok{0} \OperatorTok{{-}=} \DecValTok{1}\NormalTok{ )}
  \OperatorTok{\}}

  \KeywordTok{fn}\NormalTok{ print\_all(ns}\OperatorTok{:}\NormalTok{ Query}\OperatorTok{\textless{}}\NormalTok{(Entity}\OperatorTok{,} \OperatorTok{\&}\NormalTok{Num)}\OperatorTok{\textgreater{},}\NormalTok{ bs}\OperatorTok{:}\NormalTok{ Query}\OperatorTok{\textless{}}\NormalTok{(Entity}\OperatorTok{,} \OperatorTok{\&}\NormalTok{Below)}\OperatorTok{\textgreater{}}\NormalTok{) }\OperatorTok{\{}
    \PreprocessorTok{print!}\NormalTok{(     }\StringTok{"Num↦["}\NormalTok{)}\OperatorTok{;} \ControlFlowTok{for}\NormalTok{ (e}\OperatorTok{,}\NormalTok{ n) }\KeywordTok{in} \OperatorTok{\&}\NormalTok{ns }\OperatorTok{\{} \PreprocessorTok{print!}\NormalTok{(}\StringTok{"e\{\}↦\{:?\}, "}\OperatorTok{,}\NormalTok{ e}\OperatorTok{.}\NormalTok{index()}\OperatorTok{,}\NormalTok{ n) }\OperatorTok{\}}
    \PreprocessorTok{print!}\NormalTok{(}\StringTok{"], Below↦["}\NormalTok{)}\OperatorTok{;} \ControlFlowTok{for}\NormalTok{ (e}\OperatorTok{,}\NormalTok{ b) }\KeywordTok{in} \OperatorTok{\&}\NormalTok{bs }\OperatorTok{\{} \PreprocessorTok{print!}\NormalTok{(}\StringTok{"e\{\}↦\{:?\}, "}\OperatorTok{,}\NormalTok{ e}\OperatorTok{.}\NormalTok{index()}\OperatorTok{,}\NormalTok{ b) }\OperatorTok{\}}
    \PreprocessorTok{print!}\NormalTok{(}\StringTok{"]}\SpecialCharTok{\textbackslash{}n}\StringTok{"}\NormalTok{)}\OperatorTok{;}
  \OperatorTok{\}}

  \KeywordTok{pub} \KeywordTok{fn}\NormalTok{ demo() }\OperatorTok{\{}
    \KeywordTok{let} \KeywordTok{mut}\NormalTok{ world }\OperatorTok{=} \PreprocessorTok{World::}\KeywordTok{default}\NormalTok{()}\OperatorTok{;}
\NormalTok{    world}\OperatorTok{.}\NormalTok{spawn(Num(}\DecValTok{3}\NormalTok{))}\OperatorTok{;}
\NormalTok{    world}\OperatorTok{.}\NormalTok{spawn(Num(}\DecValTok{8}\NormalTok{))}\OperatorTok{;}
    \KeywordTok{let} \KeywordTok{mut}\NormalTok{ schedule }\OperatorTok{=} \PreprocessorTok{Schedule::}\KeywordTok{default}\NormalTok{()}\OperatorTok{;}
\NormalTok{    schedule}\OperatorTok{.}\NormalTok{add\_systems((classify}\OperatorTok{,}\NormalTok{ print\_all}\OperatorTok{,}\NormalTok{ (increment}\OperatorTok{,}\NormalTok{ decrement))}\OperatorTok{.}\NormalTok{chain())}\OperatorTok{;}
    \ControlFlowTok{for}\NormalTok{ \_ }\KeywordTok{in} \DecValTok{0}\OperatorTok{..}\DecValTok{3} \OperatorTok{\{}\NormalTok{ schedule}\OperatorTok{.}\NormalTok{run(}\OperatorTok{\&}\KeywordTok{mut}\NormalTok{ world)}\OperatorTok{;} \OperatorTok{\}}
  \OperatorTok{\}}
\OperatorTok{\}} \CommentTok{// end mod disjoint\_entities\_fixed}
\end{Highlighting}
\end{Shaded}

\end{codelisting}

\begin{codelisting}

\caption{Output produced by
\VERB|\PreprocessorTok{disjoint\_entities\_fixed::}\NormalTok{demo}| in
Lst.~\ref{lst:rs-disjointentities2}.}\label{lst:rs-disjointentities2-out}

\begin{Shaded}
\begin{Highlighting}[numbers=left,,]
\NormalTok{Num↦[e1↦Num(8), e0↦Num(3), ], Below↦[e0↦Below, ]}
\NormalTok{Num↦[e1↦Num(7), e0↦Num(4), ], Below↦[]}
\NormalTok{Num↦[e1↦Num(6), e0↦Num(3), ], Below↦[e0↦Below, ]}
\end{Highlighting}
\end{Shaded}

\end{codelisting}

The output in Lst.~\ref{lst:rs-disjointentities2-out} from running
\VERB|\PreprocessorTok{disjoint\_entities\_fixed::}\NormalTok{demo}|
restores the correct behavior specified in the executable model we first
introduced for the disjoint entities example.

From this example it is clear that Bevy ECS allows a narrow form of case
four from \Cref{thm:infl-safe-schedule} to execute concurrently. Our
claims at the end of \Cref{sec:ecs-conc,sec:practical-ecs} still hold,
however, because only \emph{in-place} updates of existing components
execute concurrently --- creating or deleting entities and attaching or
detaching components are still changes that are deferred in a buffer to
be executed sequentially after the system completes. The
\VERB|\NormalTok{Commands}| and (poorly named)
\VERB|\NormalTok{ParallelCommands}| structures in Bevy ECS both
exemplify this. As we have seen, deferring changes may require the
programmer to use temporary local data structures that duplicate the
meaning of those changes (as with the \VERB|\NormalTok{used}| set in the
\VERB|\NormalTok{collide}| system of Lst.~\ref{lst:rs-toyphys}).
Furthermore, in Bevy ECS it is difficult to know whether systems that
are intended to execute concurrently with each other actually
\emph{will} execute that way. Core ECS and its executable model have an
expressivity advantage in this regard, which suggests that practical ECS
frameworks could adopt this simpler interface style as a step toward
taking advantage of the additional concurrency that we have identified
is safe.

\begin{comment}

```haskell
main :: IO ()
main = do
    putStrLn "toyPhys"
    toyPhys
    putStrLn "disjointEntities"
    disjointEntities
```
```rust
fn main () {
    print!("toy_phys\n");
    toy_phys::demo();
    print!("disjoint_entities\n");
    disjoint_entities::demo();
    print!("disjoint_entities_fixed\n");
    disjoint_entities_fixed::demo();
}
```

\end{comment}

\fi % end appendix content

\end{document}